\def\comment#1{}%
\def\withcomments{%
  \newcounter{mycommentcounter}%
   \def\comment##1{\refstepcounter{mycommentcounter}%
    \ifhmode%
     \unskip%
     {\dimen1=\baselineskip \divide\dimen1 by 2 %
       \raise\dimen1\llap{\tiny
	{-\themycommentcounter-}}}\fi%
     \marginpar[{\renewcommand{\baselinestretch}{0.8}%
       \hspace*{-2em}\begin{minipage}{1.3\marginparwidth}\footnotesize%
[\themycommentcounter]:%
\raggedright ##1\end{minipage}}]{\renewcommand{\baselinestretch}{0.8}%
       \begin{minipage}{1.3\marginparwidth}\footnotesize%
[\themycommentcounter]: \raggedright%
##1\end{minipage}}}%
  }
\newcommand{\comm}[1]{}
\newcommand{\bc}{\operatorname{bc}}
\newcommand{\des}{\operatorname{des}}
\newcommand{\gap}{\operatorname{gap}}
\newcommand{\lis}{\operatorname{lis}}
\newcommand{\bp}{\operatorname{bp}}
\newcommand{\bcalg}{\operatorname{bc}_{\operatorname{alg}}}
\newcommand{\bcopt}{\operatorname{OPT}}
\newcommand{\bciialg}{\overline{\operatorname{bc}}_{\operatorname{alg}}}
\newcommand{\bciialgt}{\overline{\operatorname{bc}}_{\operatorname{alg}t}}
\newcommand{\bciialgb}{\overline{\operatorname{bc}}_{\operatorname{alg}b}}
\newcommand{\bciiopt}{\overline{\operatorname{bc}}_{\operatorname{opt}}}
\newcommand{\gp}{\operatorname{gp}}
\newcommand{\gpinit}{\operatorname{gp}_{\operatorname{init}}}
\let\doendproof\endproof
\renewcommand\endproof{~\hfill\qed\doendproof}
\begin{document}
\date{}

\title{Ordering Metro Lines by Block Crossings}

\author{Martin Fink \inst1 \and Sergey Pupyrev \inst2}

\institute{
    Lehrstuhl f\"{u}r Informatik I, Universit\"{a}t
    W\"{u}rzburg, Germany. 
\and 
    Department of Computer Science, University of Arizona, USA.
}

\maketitle
\begin{abstract}
  A problem that arises in drawings of transportation networks is to
  minimize the number of crossings between different transportation
  lines. While this can be done efficiently under specific constraints,
  not all solutions are visually equivalent. We suggest
  merging crossings into \emph{block crossings}, that is, crossings
  of two neighboring groups of consecutive lines. Unfortunately,
  minimizing the total number of block
  crossings is NP-hard even for very simple graphs. We
  give approximation algorithms for special classes of graphs and
  an asymptotically worst-case optimal algorithm for block crossings
  on general graphs. That is, we bound the number of block crossings
  that our algorithm needs and construct worst-case instances on which the
  number of block crossings that is necessary
  in any solution is asymptotically the same as our bound.
\end{abstract}


\section{Introduction}
\label{sec:intro}
In many metro maps and transportation networks
some edges, that is, railway
track or road segments, are used by several \emph{lines}. Usually, to
visualize such networks, lines that share an edge are drawn individually
along the edge in distinct colors. Often, some lines must cross,
and it is desirable to draw the lines with few crossings. The
\emph{metro-line crossing minimization} problem has recently been
introduced~\cite{benkert07}. The goal
is to order the lines along each edge such that the number of
crossings is minimized. So far, the focus has been on the number of
crossings and not on their visualization, although two line orders
with the same crossing number may look quite differently;
see~Fig.~\ref{fig:metro}.

\begin{figure}[tb]
    ~\hfill\includegraphics[width=0.45\textwidth,page=1]{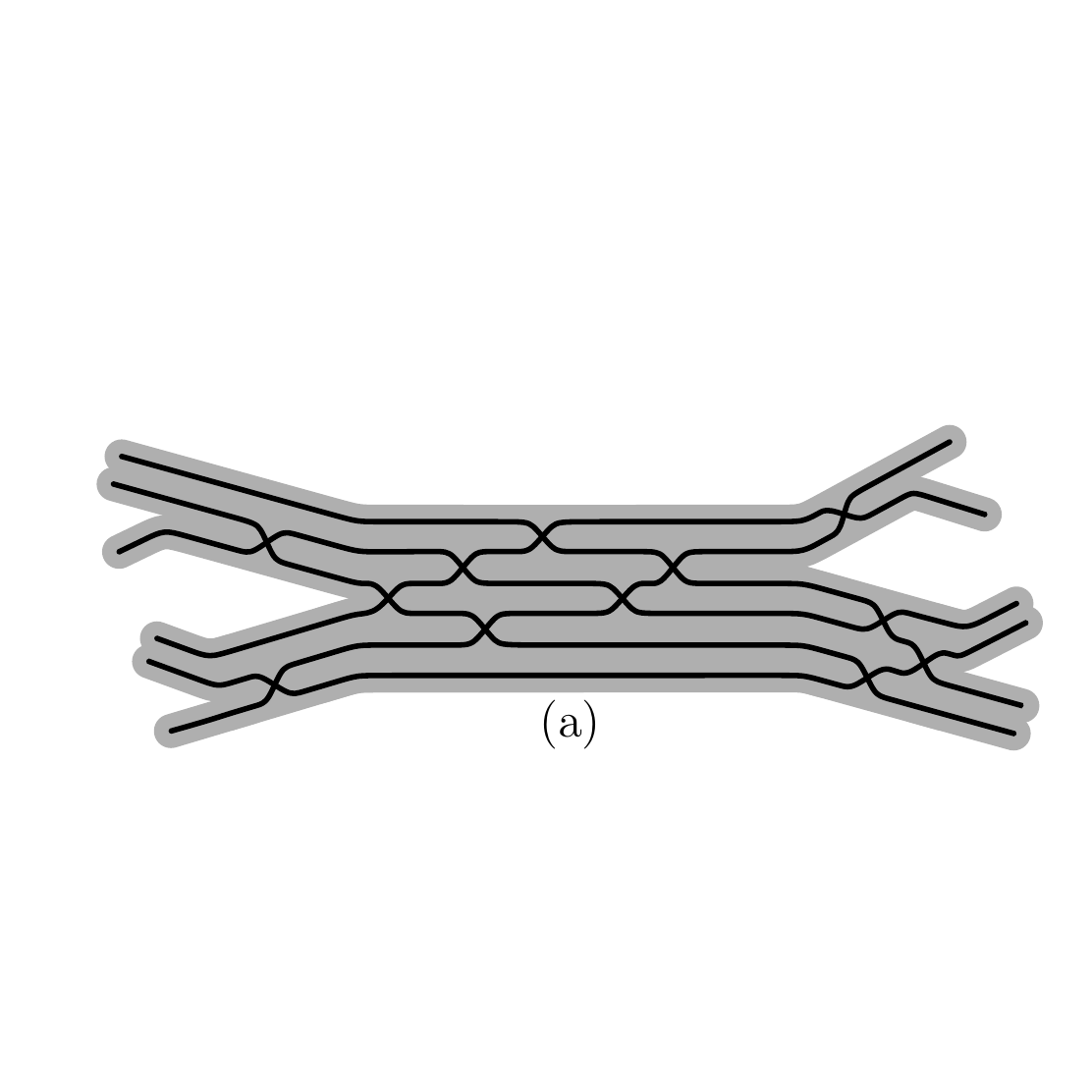}  \hfill
    \includegraphics[width=0.45\textwidth,page=2]{pics/metro0}\hfill~
  \caption{Optimal orderings of a metro network: (a)~12 pairwise
  crossings; (b)~3 block crossings.}
  \label{fig:metro}
\end{figure}

Our aim is to improve the readability of metro maps by computing line
orders that are aesthetically more pleasing. To this end, we merge
\emph{pairwise} crossings
into crossings of blocks of lines minimizing the number of \emph{block
crossings} in the map. Informally, a block crossing is an intersection
of two neighboring groups of consecutive lines
sharing the same edge; see~Fig.~\ref{fig:metro}(b). We consider two variants of the problem.
In the first variant, we want to find a line ordering with the minimum number
of block crossings. In the second variant, we want to minimize both
pairwise and block crossings.

\subsubsection{Motivation.}
Although we present our results in terms of the classic problem of visualizing
metro maps, crossing minimization between paths on an embedded
graph is used in various fields. In very-large-scale integration
(VLSI) chip layout, a wire diagram  should have few wire crossings~\cite{groeneveld89a}.
Another application
is the visualization of biochemical pathways~\cite{schreiber2002high}.
In graph drawing the number of edge crossings is
considered one of the most popular aesthetic criteria. Recently, a lot
of of research, both in
graph drawing and information visualization, is devoted to
\emph{edge bundling}. In this setting, some edges are drawn close
together---like metro lines---which emphasizes the structure of~the graph~\cite{pupyrev11}.
Block crossings can greatly improve the
readability of bundled graph drawings.

\subsubsection{Problem definition.}
The input consists of an embedded graph~$G=(V,E)$, and
a set $L=\{l_1,\ldots,l_{|L|}\}$
of simple paths in $G$. We call $G$ the \emph{underlying network} and the
paths \emph{lines}. The nodes of $G$ are
\emph{stations} and the endpoints $v_0,v_k$ of a line
$(v_0, \ldots,v_k)\in~L$  are \emph{terminals}.
For each edge $e = (u,v)\in E$, let $L_{e}$ be the set of lines
passing through~$e$. For $i \le j < k$, a \emph{block move} $(i,j,k)$
on the sequence $\pi=\left[\pi_1,\dots,\pi_n\right]$ of lines on $e$ is the exchange of two
consecutive blocks
$\pi_i,\dots,\pi_j$ and $\pi_{j+1},\dots,\pi_k$. We are interested in
\emph{line orders} $\pi^0(e),\dots,\pi^{t(e)}(e)$ on $e$, so that
$\pi^0(e)$ is the order of lines $L_e$ on $e$ close to $u$,
$\pi^{t(e)}(e)$ is the order close to $v$, and
each $\pi^i(e)$ is an ordering of $L_{e}$ so that $\pi^{i+1}(e)$ is
constructed from $\pi^i(e)$ by a block move.
We say that there are $t$ \emph{block crossings} on~$e$.

\begin{wrapfigure}[17]{r}{.293\textwidth}
  \centering
    \includegraphics[height=2.1cm,page=1]{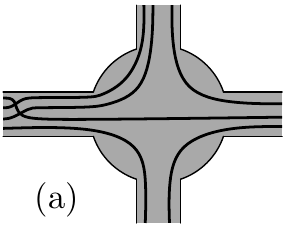}
    \hfill
    \includegraphics[height=2.1cm,page=2]{pics/consistency}
  \caption{Consistent line orders (a) without, (b) with an
  unavoidable vertex crossing.}
  \label{fig:consistency}
\end{wrapfigure}
Following previous work \cite{argyriou09,nollenburg09}, we use the
\emph{edge crossings} model,
that is, we do not hide crossings under station symbols if possible.
Two lines sharing at least one common edge
either do not cross or cross each other on an edge but never in a
node; see Fig.~\ref{fig:consistency}(a). For pairs of lines sharing
a vertex but no edges, crossings at the vertex are allowed and not
counted as they exist at this position
in any solution. We call them
\emph{unavoidable vertex crossings}; see
Fig.~\ref{fig:consistency}(b). If the line orders on the edges
incident to a vertex~$v$ produce only edge crossings and unavoidable vertex
crossings, we call them \emph{consistent} in $v$. Line orders for all edges are
consistent if they are consistent in all nodes.
More formally, we can check consistency of line orders in a vertex $v$ by
looking at each incident edge $e$. Close to $v$ the order of lines $L_e$ on
$e$ is fixed. The other edges $e_1, \ldots, e_k$ incident to
$v$ contain lines
of $L_e$. The combined order of $L_e$ on the edges $e_{1}, \ldots,
e_k$ must be the same as the order on $e$; otherwise, lines of
$L_e$ would cross in $v$.
Now, we can define the \emph{block crossing minimization} problem (BCM).

\begin{problem}[\textbf{BCM}]
Let $G=(V,E)$ be an embedded graph and let $L$ be a set of lines on~$G$.
For each edge $e \in E$, find line orders $\pi^0(e),\dots,\pi^{t(e)}(e)$ such
that the total number of block crossings, $\sum_{e\in E}t(e)$, is minimum
and the line orders are consistent.
\end{problem}

In this paper, we restrict our attention to instances with two
additional properties.
First, any line terminates at nodes of degree one and no two lines
terminate at the same node (\textbf{path terminal property}).
Second, the intersection of two lines, that is, the edges and vertices
they have in common, forms a path (\textbf{path
intersection property}). This includes the cases that the intersection
is empty or a single node.
If both properties hold, a pair of lines
either has to cross, that is, a crossing is
\emph{unavoidable}, or it can be kept crossing-free, that is, a crossing
is \emph{avoidable}. The orderings that are optimal with
respect to pairwise crossings are exactly the orderings that
contain just unavoidable crossings (Lemma 2 in~\cite{nollenburg09}); that is, any pair of lines
crosses at most once, in an equivalent formulation. As this is a very
reasonable condition also for block crossings, we use it to define the
\emph{monotone block crossing minimization} problem (MBCM) whose
feasible solutions must have the minimum number of pairwise
crossings.
\begin{problem}[\textbf{MBCM}] Given an instance of BCM, find a
  feasible solution that minimizes the number of block crossings
  subject to the constraint that no two lines cross twice.
\end{problem}
On some instances BCM does allow fewer crossings than
MBCM does; see Fig.~\ref{fig:bc_vs_mbc}.

\subsubsection{Our contribution.}
We introduce the new problems BCM and MBCM.
To the best of our knowledge, ordering lines by block
crossings is a new direction in graph drawing. So far BCM
has been investigated only for the case that the
\emph{skeleton}, that is, the graph without terminals, is a single
edge~\cite{bafna98}, while MBCM is a completely new problem.

We first analyze MBCM on a single edge (Sec.~\ref{sec:edge}),
exploiting, to some extent,
the similarities to \emph{sorting by transpositions}~\cite{bafna98}. Then, we use the
notion of \emph{good pairs} of lines, that is, lines that should be
neighbors, for developing an approximation algorithm for BCM on graphs
whose skeleton is a path (Sec.~\ref{sec:path}); we
properly define good pairs so that changes between adjacent edges are
taken into account. Yet, good pairs can not always be kept close;
we introduce a good strategy for breaking pairs when needed.

Unfortunately, the approximation algorithm does not generalize to
trees. We do, however, develop a worst-case optimal algorithm for
trees (Sec.~\ref{sec:tree}). It needs $2|L|-3$ block crossings and
there are instances in which this number of block crossings is
necessary in any solution. We then use our algorithm
for obtaining approximate solutions for MBCM on the special class of
\emph{upward trees}.

\begin{table}[t]
    \centering
    \begin{tabular}{|l|c|c|c|c|}
        \hline
        \centering graph class &
        \multicolumn{2}{c|}{BCM} &
        \multicolumn{2}{c|}{MBCM} \\
        \hline
        \hline
        single edge         & $11/8$-approximation & \cite{elias06} &
            3-approximation & Sec.~\ref{sec:edge} \\
        path                & 3-approximation  & Sec.~\ref{sec:path}
                            & 3-approximation & Sec.~\ref{sec:pathr-appendix} \\
        tree                & $\le 2|L|-3$ crossings &
        Sec.~\ref{sec:tree} & $\le 2|L|-3$ crossings & Sec.~\ref{sec:tree}\\
        upward tree         & \centering ---  & & 6-approximation &
        Sec.~\ref{sec:utree}\\
        general graph       & $O(|L|\sqrt{|E|})$ crossings  &
       Sec.~\ref{sec:graph} & $O(|L|\sqrt{|E|})$ crossings & Sec.~\ref{sec:graph}\\
        \hline
    \end{tabular}
    \medskip
   \caption{Overview of our results for BCM and MBCM.}
    \label{table:or}
\end{table}

As our main result, we develop an algorithm for obtaining a solution
for (M)BCM on general graphs (Sec.~\ref{sec:graph}). We show that it
uses only monotone block moves and analyze the upper bound
on the number of block crossings. While the algorithm itself is simple and
easy to implement, proving the upper bound is non-trivial.
Next, we show that the bound is tight; we use
a result from projective geometry for constructing worst-case examples
in which any feasible solution contains many block crossings. Hence,
our algorithm is asymptotically worst-case optimal.
Table~\ref{table:or} summarizes our results.

\subsubsection{Related work.}
Line crossing problems in transportation networks were initiated by
Benkert et~al.~\cite{benkert07}, who considered the problem of
\emph{metro-line crossing minimization} (MLCM) on a single
edge. MLCM in its general model
is challenging; its complexity is open and no efficient algorithms
are known for the case of two or more edges.
Bekos~et~al.~\cite{bekos08} addressed the problem on paths and trees.
They also proved that a variant in which all lines must be placed outermost in their
terminals is NP-hard. Subsequently, Argyriou~et~al.~\cite{argyriou09} and
N\"ol\-len\-burg~\cite{nollenburg09} devised polynomial-time algorithms for
general graphs with the  path terminal property.
Pupyrev~et~al.~\cite{pupyrev11} studied MLCM in the
context of edge bundling. They suggested a linear-time algorithm
for MLCM on instances with the path terminal property. All
these works are dedicated to pairwise crossings; the optimization
criterion being the number of crossing pairs of lines.

A closely related problem arises in VLSI design, where the goal is to
minimize intersections between nets (physical wires)~\cite{groeneveld89a,mareksadowska95}.
Net patterns with fewer crossings most likely have better electrical
characteristics and require less wiring area; hence, it is an important
optimization criterion in circuit board design.
Marek-Sadowska and Sarraf\-za\-deh~\cite{mareksadowska95} considered not
only minimizing the number of crossings, but also
suggested distributing the crossings among circuit regions in
order to simplify net routing.

BCM on a \emph{single} edge is equivalent to the problem of sorting a
permutation by block moves, which is well studied in computational
biology for DNA sequences; it is known as \emph{sorting by
transpositions}~\cite{bafna98,christie01}. The task is to find the
shortest sequence of block moves transforming a given permutation into
the identity permutation. The complexity of the
problem was open for a long time; only recently it has been shown to be
NP-hard~\cite{bulteau11}. The currently best known algorithm has an
approximation ratio of $11/8$~\cite{elias06}.
The proof of correctness of that algorithm is based on a computer
analysis which verifies more than $80,000$ configurations.
To the best of our knowledge, no tight upper bound is know for the problem.
There are several variants of sorting by transpositions; see the
survey of Fertin et al.~\cite{cogr}. For instance,
Vergara~et~al.~\cite{Heath98} used \emph{correcting short block moves}
to sort a permutation. In our terminology, these are monotone moves
such that the combined length of exchanged blocks does not exceed
three. Hence, their problem is a restricted variant of MBCM on a
single edge; its complexity is unknown.


\section{Block Crossings on a Single Edge}
\label{sec:edge}

First, we restrict our attention to networks consisting of a single
edge with multiple lines passing through it. BCM then can be
reformulated as follows. Given two permutations $\pi$ and $\tau$
(determined by the order of terminals on both sides of the edge), find
the shortest sequence of block moves transforming $\pi$ into $\tau$.
By relabeling we can assume that $\tau$ is the identity permutation,
and the goal is to sort $\pi$. This problem is known as
\emph{sorting by transpositions} \cite{bafna98}. We
concentrate on the new problem of
sorting with monotone block moves; that means that the relative order
of any pair of elements changes at most once. The problems are not
equivalent; see~Fig.\ref{fig:bc_vs_mbc} for an example where
non-monotonicity allows fewer crossings.
In what follows, we give lower and upper bounds on the number of block
crossings for MBCM on a single edge. Additionally, we present a simple
3-approximation algorithm for the problem.

\begin{wrapfigure}[13]{r}{.31\textwidth}
  \centering
    \includegraphics[height=2.5cm]{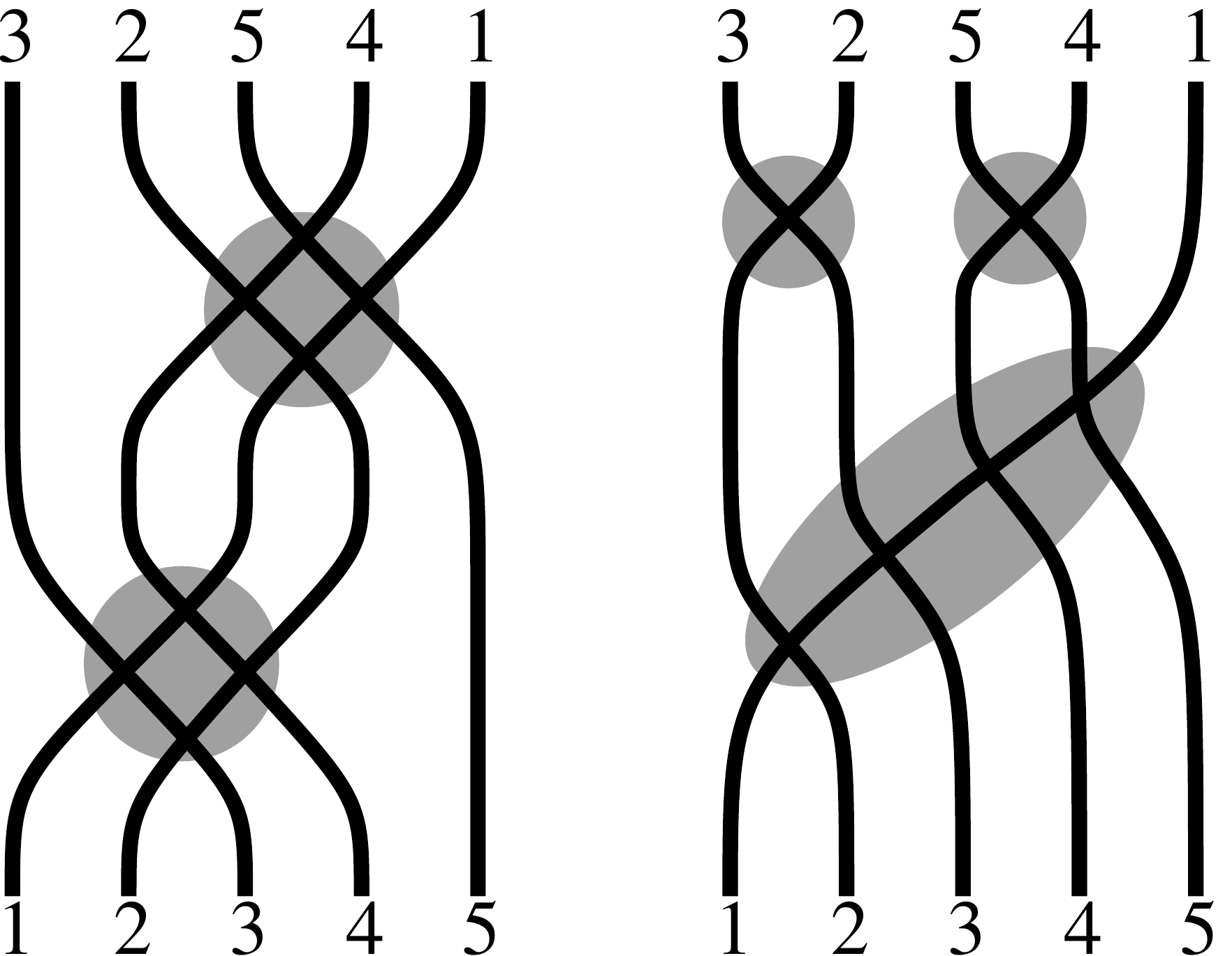}
  \caption{Permutation [3 2 5 4 1] is sorted with 2 block
      moves and 3 monotone block moves.}
  \label{fig:bc_vs_mbc}
\end{wrapfigure}
We first introduce some terminology following the one from previous works where possible. Let $\pi=[\pi_1,\dots,\pi_n]$ be
a permutation of $n$ elements. For convenience, we assume there are
extra elements $\pi_0 = 0$ and $\pi_{n+1} = n+1$ at the beginning of
the permutation and at the end, respectively.
A \emph{block} in $\pi$ is a sequence of consecutive elements $\pi_i,\dots,\pi_j$
with $i \le j$. A \emph{block move} $(i, j, k)$ with $i\le j < k$ on $\pi$ maps $[\dots
\pi_{i-1} \pi_{i} \ldots \pi_{j}\pi_{j+1} \dots \pi_{k}
\pi_{k+1}\dots]$ to $[\dots \pi_{i-1} \pi_{j+1} \ldots \pi_{k}\pi_{i}
  \dots \pi_{j} \pi_{k+1}\dots]$. We say that a block move $(i,j,k)$
is \emph{monotone} if $\pi_q > \pi_r$ for all $i \le q \le j < r \le k$.
We denote the minimum number of monotone block moves needed to sort $\pi$ by $\bc(\pi)$.
An ordered pair $(\pi_i, \pi_{i+1})$ is a \emph{good pair}
if $\pi_{i+1} = \pi_i+1$, and a \emph{breakpoint} otherwise.
Intuitively, sorting $\pi$ is a process of creating good pairs (or
destroying breakpoints) by block moves.
A permutation is \emph{simple} if it has no good pairs. Any
permutation can be uniquely simplified---by glueing good pairs
together and
relabeling---without affecting its distance to the identity
permutation~\cite{christie01}.
A breakpoint $(\pi_i, \pi_{i+1})$ is a \emph{descent} if $\pi_i > \pi_{i+1}$, and
a \emph{gap} otherwise. We use $\bp(\pi)$, $\des(\pi)$, and $\gap(\pi)$ to denote the number
of breakpoints, descents, and gaps in $\pi$.
The \emph{inverse} of a permutation $\pi$ is the permutation $\pi^{-1}$ in which each
element and the index of its position are exchanged, that is,
$\pi_{\pi_i}^{-1} = i$ for $1 \le i \le n$.
A descent in $\pi^{-1}$, that is, a pair of elements $\pi_i = \pi_j+1$
with $i<j$, is called an \emph{inverse descent} in $\pi$.
Analogously, an \emph{inverse gap} is a pair of elements $\pi_i =
\pi_j+1$ with $i>j+1$.
Now, we give lower and upper bounds for MBCM, that is, on~$\bc(\pi)$.

\subsubsection{A lower bound.}
It is easy to see that a block move affects three pairs of adjacent elements.
Therefore the number of breakpoints can be reduced by at most three
in a move. As only the identity permutation has no breakpoints, this implies
$\bc(\pi) \ge \bp(\pi)/3$ for a simple permutation $\pi$~\cite{bafna98}.
The following observations yield better lower bounds.

\begin{lemma}
In a monotone block move, the number of descents in a permutation
decreases by at most one, and the number of gaps decreases by at
most two.
\end{lemma}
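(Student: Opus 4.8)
The plan is to analyze a monotone block move $(i,j,k)$ locally by tracking exactly which adjacent pairs it can create or destroy, and then use the monotonicity hypothesis to show that the possible gains at each of the three affected boundaries are constrained. A block move $(i,j,k)$ transforms the window $[\dots\pi_{i-1}\,\pi_i\dots\pi_j\,\pi_{j+1}\dots\pi_k\,\pi_{k+1}\dots]$ by swapping the two blocks, so the only adjacent pairs whose status (good pair versus breakpoint, descent versus gap) can change are the three boundary pairs: the pair at position $i-1$ (where $\pi_{i-1}$ meets the first element of the second block), the pair where the second block's end meets the first block's start, and the pair at position $k$ (where the first block's last element meets $\pi_{k+1}$). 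All internal pairs within each block are preserved. So the entire argument reduces to a case analysis of these three junctions.

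First I would set up notation for the three boundary pairs before and after the move and observe that a descent is a breakpoint with the larger element on the left, while a gap is a breakpoint with the smaller element on the left. For the descent count, I would argue that monotonicity forces the newly created junction between the two swapped blocks (the pair $(\pi_k,\pi_i)$ after the move) to be a descent, since monotonicity means $\pi_q>\pi_r$ for all $i\le q\le j<r\le k$, so in particular the boundary elements compare in the ``wrong'' order; this junction therefore cannot contribute a decrease in descents and may even forbid eliminating a descent elsewhere. The key quantitative claim is that across the three junctions the net change in the number of descents is at least $-1$: at most one descent can genuinely be removed because the monotonicity constraint pins down the sign of the comparison at the central junction.

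For the gap bound I would run the parallel analysis, showing that at most two of the three junctions can switch from gap to good-pair under a monotone move, with the central junction again obstructed by monotonicity from being a favorable gap removal. Concretely, creating a good pair at a junction requires the two elements there to be consecutive integers in increasing order; I would check which of the three junctions can simultaneously achieve this and argue the central one cannot contribute a gap reduction because its elements sit in decreasing order by monotonicity. Combining the junction-by-junction accounting then yields that gaps drop by at most two.

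The main obstacle I anticipate is making the case analysis genuinely exhaustive yet clean: each of the three junctions can independently be a good pair, a descent, or a gap both before and after the move, and one must rule out the configurations that would allow two descents or three gaps to vanish at once. The leverage that keeps this tractable is the monotonicity hypothesis, which fixes the orientation of the comparison at the central junction and thereby collapses most of the combinatorial cases; I would foreground that observation early so the remaining bookkeeping is short. A sanity check against the pairwise-crossing literature (where a general, non-monotone block move can destroy three breakpoints) confirms that the improvement here—descents down by one, gaps down by two—comes precisely from forbidding the central junction to help, which is exactly what monotonicity buys us.
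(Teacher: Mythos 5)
Your overall framework --- restrict attention to the three affected junctions and let monotonicity pin down the central one --- is the same strategy the paper uses, and your gap argument is essentially correct: before the move the central junction is $(\pi_j,\pi_{j+1})$ with $\pi_j>\pi_{j+1}$ by monotonicity, hence it is not a gap and cannot account for a gap removal, leaving only the two outer junctions. (A minor quibble: a gap is destroyed if it becomes a good pair \emph{or} a descent, not only a good pair, but this does not affect the count of two.)

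The descent half, however, has a genuine flaw. First, the post-move central junction is $(\pi_k,\pi_i)$ with $\pi_k<\pi_i$ by monotonicity, i.e.\ it is \emph{increasing}, not a descent as you claim; in fact the central junction is precisely the place where a descent \emph{is always destroyed} (it is decreasing before the move and increasing after). Second, and more importantly, even if the central junction contributed no decrease, that alone would only bound the number of destroyed descents by two, one at each outer junction. The step you are missing is the one the paper actually proves: neither outer junction can lose a descent. Concretely, writing the move as $[\dots a\,b \dots c\,d \dots e\,f \dots] \Rightarrow [\dots a\,d \dots e\,b \dots c\,f \dots]$, if the descent at $(a,b)$ (so $a>b$) were destroyed then $a<d$, hence $b<d$, contradicting monotonicity since $b$ lies in the left block and $d$ in the right; symmetrically for $(e,f)$. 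Without this argument your sketch does not establish the bound of one.
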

\begin{proof}
Consider a monotone move
$[\dots a\:b \dots c\:d \dots e\:f \dots] \Rightarrow [\dots a\:d \dots e\:b
\dots c\:f \dots]$;
it affects three adjacencies. Suppose a descent is destroyed between
$a$ and $b$, that is,
$a > b$ and $a < d$. Then, $b < d$ which contradicts monotonicity.
Similarly, no descent can be destroyed between $e$ and $f$.
On the other hand, since $c>d$, no gap can be destroyed
between $c$ and $d$.
\end{proof}

A similar claim holds for the inverse descents and gaps.

\begin{lemma}
In a monotone block move, the number of inverse descents decreases by
at most one, and the number of inverse gaps decreases by at most two.
\end{lemma}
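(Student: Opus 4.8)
The plan is to mimic the proof of the previous lemma, but transported through the inverse permutation. The key observation is that the notions of inverse descent and inverse gap in $\pi$ are, by definition, precisely descents and gaps in $\pi^{-1}$: an inverse descent of $\pi$ is a pair $\pi_i = \pi_j+1$ with $i<j$, which is exactly a descent $(\pi^{-1}_m, \pi^{-1}_{m+1})$ with $\pi^{-1}_m > \pi^{-1}_{m+1}$ in the inverse, and similarly for gaps. So if I can show that applying a monotone block move to $\pi$ induces a corresponding monotone block move on $\pi^{-1}$, then the bound on descents and gaps from the previous lemma transfers immediately to inverse descents and inverse gaps.

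First I would make precise what a block move on $\pi$ does to $\pi^{-1}$. A block move $(i,j,k)$ on $\pi$ swaps two consecutive blocks of \emph{positions}; in terms of $\pi^{-1}$, this corresponds to relocating the \emph{values} in those positions, which is itself describable as a block move on $\pi^{-1}$ — here I would use the standard fact (implicit in the sorting-by-transpositions literature, e.g.~\cite{bafna98,christie01}) that the set of transpositions/block moves is closed under conjugation by inversion, so that sorting $\pi$ and sorting $\pi^{-1}$ are symmetric operations with the same move structure. The step that needs care is checking that \emph{monotonicity is preserved}: a move on $\pi$ is monotone iff every element of the left block exceeds every element of the right block, and I must verify that the induced move on $\pi^{-1}$ is likewise monotone. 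This follows because monotonicity is a statement about the relative order of pairs of elements changing at most once, a property symmetric in $\pi$ and $\pi^{-1}$ (a pair of values is inverted in $\pi$ exactly when the corresponding pair is inverted in $\pi^{-1}$).

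Given this correspondence, I would finish as follows: a monotone block move on $\pi$ becomes a monotone block move on $\pi^{-1}$; by the previous lemma applied to $\pi^{-1}$, the number of descents in $\pi^{-1}$ decreases by at most one and the number of gaps in $\pi^{-1}$ decreases by at most two; translating back, the number of inverse descents in $\pi$ decreases by at most one and the number of inverse gaps decreases by at most two. Alternatively, if the inverse-move correspondence turns out to be fiddly to state cleanly, I would instead repeat the direct argument of the previous lemma: write the three affected adjacencies explicitly and show that destroying an inverse descent at the boundary forces two values whose order contradicts monotonicity, and that no inverse gap can be destroyed at the middle junction.

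The main obstacle I anticipate is handling the inverse-move bookkeeping rigorously: a block move on positions does not map to a block move on the \emph{same} indices of the inverse, and one must be careful that the induced operation really is a single block move (rather than something more complicated) and that it inherits monotonicity. If this turns out awkward, the safer route is the self-contained adjacency argument, which avoids any appeal to closure under inversion at the cost of re-deriving the three-adjacency case analysis for the inverse statistics.
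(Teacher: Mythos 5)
Your primary route has a genuine gap: a monotone block move on $\pi$ does not, in general, induce a single block move on $\pi^{-1}$. A block move exchanges two adjacent blocks of \emph{positions}, whose contents need not be consecutive \emph{values}, so the induced change in $\pi^{-1}$ need not affect a contiguous range of entries. Concretely, take $\pi = [3,1,2]$ and the monotone move exchanging the unit blocks $3$ and $1$, giving $[1,3,2]$. Then $\pi^{-1} = [2,3,1]$ and $[1,3,2]^{-1} = [1,3,2]$, but no exchange of two adjacent blocks turns $[2,3,1]$ into $[1,3,2]$: the entries at positions $1$ and $3$ change while the entry at position $2$ does not, and the four possible block moves on $[2,3,1]$ yield $[3,2,1]$, $[3,1,2]$, $[1,2,3]$, $[2,1,3]$ only. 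The fact from the sorting-by-transpositions literature is that the \emph{distance} satisfies $d(\pi)=d(\pi^{-1})$, via a correspondence between entire (reversed) sorting sequences; it is not a move-by-move conjugation that would let you transport the per-move statistics of the previous lemma. So the reduction cannot be made to work as stated, and you correctly identified this as the weak point.

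Your fallback is the right idea---the paper's proof is exactly a direct argument---but as sketched it imports the wrong structure. Inverse descents and gaps are statements about pairs of \emph{values} $(x,x+1)$ wherever they sit, not about the three affected adjacencies: an inverse descent $(x+1$ before $x)$ is destroyed whenever the relative order of $x+1$ and $x$ flips, which can happen for \emph{any} $x+1$ in the left block paired with \emph{any} $x$ in the right block, far from the junctions. The paper's key step is: if two such pairs $(x+1,x)$ and $(y+1,y)$ were both destroyed, with $x+1,y+1$ in the left block and $x,y$ in the right block, monotonicity gives $y+1>x$ and $x+1>y$, forcing $x=y$; hence at most one inverse descent is destroyed. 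For inverse gaps, monotonicity rules out any inverse gap with one element in each block (the larger value would have to sit at the later position, contradicting that every left-block element exceeds every right-block element), so a destroyed inverse gap must arise from one of the two outer new adjacencies, namely $\pi_{i-1}$ against the second block or the first block against $\pi_{k+1}$, giving the bound of two. Neither counting argument appears in your sketch, and destroying a single inverse descent does not by itself contradict monotonicity, so the fallback as written would still need the essential content filled in.
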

\begin{proof}
Consider a monotone exchange of blocks $\pi_i,\dots,\pi_j$ and $\pi_{j+1},\dots,\pi_k$.
Note that inverse descents can only be destroyed between elements
$\pi_q$ $(i \le q \le j)$ and $\pi_r$ $(j+1 \le r \le k)$. Suppose that
the move destroys two inverse descents such that the first block
contains elements $x+1$ and $y+1$, and
the second block contains $x$ and $y$. Since the block move is
monotone, $y+1 > x$ and $x+1 > y$, which means that $x=y$.

On the other hand, there cannot be inverse gaps between elements $\pi_q$ $(i \le q \le j)$ and
$\pi_r (j+1 \le r \le k)$. Therefore, there are only two
possible inverse gaps between $\pi_{i-1}$ and
$\pi_r (j < r \le k)$, and between $\pi_q (i \le q \le j)$ and $\pi_{k+1}$.
\end{proof}

Combining the lemmas, we obtain the following result.

\begin{theorem}
A lower bound on the number of monotone block moves needed to sort a permutation is
$\bc(\pi) \ge \max(\bp(\pi)/3, \des(\pi), \gap(\pi)/2, \des(\pi^{-1}), \gap(\pi^{-1})/2)$.
\end{theorem}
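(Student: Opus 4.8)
The plan is to assemble the bound directly from the three ingredients already established: the breakpoint argument of~\cite{bafna98}, together with the two lemmas above. The common observation is that sorting aims at the identity permutation, and the identity has no breakpoints, no descents, and no gaps; moreover its inverse is again the identity, so it also has no inverse descents and no inverse gaps. Hence each of the five statistics $\bp(\pi)$, $\des(\pi)$, $\gap(\pi)$, $\des(\pi^{-1})$, and $\gap(\pi^{-1})$ must be driven down to zero by the sequence of monotone block moves that sorts $\pi$.

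First I would recall that a single block move destroys at most three breakpoints, which yields $\bc(\pi)\ge\bp(\pi)/3$ exactly as in~\cite{bafna98}. Next, the first lemma states that a monotone move lowers the descent count by at most one and the gap count by at most two; since these counts begin at $\des(\pi)$ and $\gap(\pi)$ and must both reach zero, any sorting sequence has length at least $\des(\pi)$ and at least $\gap(\pi)/2$.

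Then I would apply the second lemma in the same manner. The one point needing care is the translation between the inverse statistics of $\pi$ and the ordinary statistics of $\pi^{-1}$: by definition, an inverse descent (resp.\ inverse gap) of $\pi$ is precisely a descent (resp.\ gap) of $\pi^{-1}$. Since the moves act on $\pi$ and the second lemma bounds the per-move decrease of the inverse descents and inverse gaps of $\pi$ directly, and since $\pi^{-1}$ becomes the identity exactly when $\pi$ does, the same counting gives $\bc(\pi)\ge\des(\pi^{-1})$ and $\bc(\pi)\ge\gap(\pi^{-1})/2$.

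Finally, because each of the five inequalities is a lower bound on the length of the same optimal sorting sequence, taking their maximum is immediate and produces the stated bound. I expect no real obstacle beyond keeping the inverse bookkeeping straight: the whole proof is the elementary principle that a nonnegative quantity decreasing by at most $c$ per step from value $V$ to $0$ requires at least $V/c$ steps, invoked five times.
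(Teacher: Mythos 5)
Your proposal is correct and matches the paper's argument exactly: the paper also obtains the theorem by combining the breakpoint bound of~\cite{bafna98} with the two lemmas, using the fact that all five statistics vanish on the identity and decrease by at most $3$, $1$, $2$, $1$, and $2$ per monotone move, respectively. Your bookkeeping identifying inverse descents and gaps of $\pi$ with descents and gaps of $\pi^{-1}$ is precisely the definition used in the paper, so there is no gap.
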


\subsubsection{An upper bound.}
We suggest the following algorithm for sorting a simple
permutation $\pi$: In each step find the smallest $i$ such that
$\pi_i\neq i$ and move element $i$ to position $i$, that is, exchange block
$\pi_i,\dots,\pi_{k-1}$ and $\pi_k$, where $\pi_k=i$.
Clearly, the step destroys at least one
breakpoint. Therefore $\bc(\pi) \le \bp(\pi)$ and the algorithm
yields a 3-approximation.

\begin{theorem}
There exists an $O(n^{2})$-time 3-approximation algorithm for MBCM on a single edge.
  \label{thm:single-edge-mbcm-approx}
\end{theorem}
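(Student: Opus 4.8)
The plan is to show that the stated greedy procedure (i)~always produces a feasible MBCM solution, i.e.\ a sequence of monotone block moves, and (ii)~uses at most $\bp(\pi)$ moves, and then to combine~(ii) with the lower bound $\bc(\pi)\ge\bp(\pi)/3$ from the preceding theorem. First I would reduce to a simple permutation: since simplification (gluing good pairs and relabelling) preserves the distance to the identity, as recalled above, it suffices to analyse the greedy on a permutation with no good pairs, and the moves found translate back one-to-one with the same count. On such a permutation I would run the greedy of repeatedly moving the smallest misplaced value $i$ to position $i$, keeping the already-sorted prefix $1,\dots,i-1$ frozen.

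The feasibility step is the routine part. When $i$ is the smallest index with $\pi_i\neq i$, the values $1,\dots,i-1$ occupy positions $1,\dots,i-1$ and $i$ sits at some position $k>i$, so every element of the first block $\pi_i,\dots,\pi_{k-1}$ exceeds $\pi_k=i$; hence the exchange is a monotone block move. I would then argue globally that the whole sequence is monotone in the required sense: such a move only swaps pairs that are currently inverted (a value larger than $i$ standing in front of $i$), putting each of them into its final relative order while leaving every other pair untouched. Thus no pair ever changes its relative order more than once, which is exactly the MBCM constraint that no two lines cross twice.

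The main obstacle is the bound on the number of moves. The tempting claim that each move destroys a breakpoint---turning the boundary breakpoint $(\pi_{i-1},\pi_i)$ into the permanent good pair $(i-1,i)$---is not literally true for the \emph{total} breakpoint count: when the value $\pi_{k+1}$ just behind $i$ equals $i+1$, the move simultaneously breaks the already-present good pair $(i,i+1)$, so $\bp$ stays the same. To handle these net-zero moves I would track the per-move change of $\bp$, writing it as $-1-a-b+c$, where $a,b$ are indicators for the two possible newly created good pairs and $c$ indicates that the move consumed a good pair $(i,i+1)$ in the unsorted part. Summing over all moves, their number equals $\bp(\pi)-\sum a-\sum b+\sum c$, so it remains to show $\sum c\le\sum b$. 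This I would prove by charging: every good pair consumed in a net-zero move lies in the still-unsorted part and, because the simplified permutation starts with no good pairs, must have been created by an earlier move; moreover, once value $i$ is placed the pair $(i,i+1)$ cannot reappear in the unsorted part, so distinct net-zero moves are charged to distinct creating moves counted by $b$. Hence the move count is at most $\bp(\pi)$.

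Combining the pieces finishes the proof: the algorithm outputs at most $\bp(\pi)$ monotone block moves, while $\bc(\pi)\ge\bp(\pi)/3$, so its cost is at most $3\,\bc(\pi)$, a $3$-approximation. For the running time I would maintain the permutation together with its inverse, so that locating the smallest misplaced $i$ (via a monotone pointer) and the position $k$ of $i$ cost $O(1)$ amortized, while performing the block move costs $O(n)$; since there are at most $n-1$ moves and simplification is $O(n)$, the total time is $O(n^2)$.
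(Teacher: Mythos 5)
Your proposal follows the paper's proof exactly: the same greedy (repeatedly move the smallest misplaced element $i$ to position $i$ in the simplified permutation), with the move count bounded above by $\bp(\pi)$ and compared against the lower bound $\bc(\pi)\ge\bp(\pi)/3$. The only difference is that you carefully patch the paper's terse claim that each step ``destroys at least one breakpoint''---which, as your indicator $c$ correctly identifies, does not by itself imply that the net breakpoint count decreases---via a valid charging argument (alternatively one can simply note that a simple permutation of length $n$ has $\bp(\pi)=n+1$ while the greedy performs at most $n-1$ moves); your explicit monotonicity check is likewise a detail the paper leaves implicit but which is needed for MBCM.
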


To construct a better upper bound, we first consider a constrained
sorting problem in which at least one of the moved blocks has unit
size; that is, we allow only  block moves of types $(i,i,k)$ and
$(i,k-1,k)$. Let $\bc^1(\pi)$ be
the minimum number of such block moves needed to sort $\pi$. We show
how to compute $\bc^1(\pi)$ exactly.
An \emph{increasing subsequence} of $\pi$ is a sequence
$\pi_{l_1}, \pi_{l_2}, \dots$ such that $\pi_{l_1} < \pi_{l_2} <
\dots$ and $l_1 < l_2 < \dots$.
Let $\lis(\pi)$ be the size of the longest increasing subsequence of $\pi$.

\begin{lemma}
$\bc^1(\pi) = n - \lis(\pi)$.
\end{lemma}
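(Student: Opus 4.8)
The plan is to first reinterpret the two permitted move types. A move $(i,i,k)$ deletes the single element $\pi_i$ and reinserts it immediately after $\pi_k$, while a move $(i,k-1,k)$ deletes $\pi_k$ and reinserts it immediately before $\pi_i$. Together these are exactly the operations ``pick one element and relocate it to an arbitrary other position'' (rightward, respectively leftward). So $\bc^1(\pi)$ is the minimum number of single-element relocations needed to sort $\pi$, and I will prove matching lower and upper bounds of $n-\lis(\pi)$.

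For the lower bound, I would argue that relocating elements never changes the relative order of the elements that are \emph{not} relocated. Hence the set of elements that are never touched during an optimal sorting sequence appears in $\pi$ in the same relative order as in the identity, i.e.\ in strictly increasing order, so it is an increasing subsequence and thus has size at most $\lis(\pi)$. Since each move relocates exactly one element, a sequence of $m$ moves touches at most $m$ distinct elements, leaving at least $n-m$ elements fixed; combining $n-m\le\lis(\pi)$ gives $m\ge n-\lis(\pi)$. (The fact that a single element might be relocated several times only strengthens the inequality, so this needs no special care.)

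For the upper bound I would exhibit a sorting sequence of exactly $n-\lis(\pi)$ moves. Fix a longest increasing subsequence $I$ and leave all its elements untouched. I process the remaining $n-\lis(\pi)$ elements one at a time in increasing order of value, maintaining the invariant that the elements placed so far together with $I$ form an increasing subsequence of the current permutation. Initially this holds because $I$ is increasing. To place the next value $v\notin I$, I relocate it with a single move to sit immediately after the largest already-placed element smaller than $v$ (or to the front if none exists); because the placed set is sorted as a subsequence and $v$ is inserted at its correct rank, the invariant is preserved. After all elements are placed the entire permutation is an increasing subsequence of itself, hence the identity, and exactly $n-\lis(\pi)$ moves were used.

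The routine parts are the reinterpretation of the move types and the lower-bound counting. The step needing the most care is the upper-bound construction: I must verify that each insertion is realizable as a single allowed relocation and, more importantly, that inserting $v$ at its rank among the already-placed elements genuinely extends the sorted subsequence rather than merely placing $v$ correctly relative to some of them. This follows from processing the elements in increasing value order so that the placed set always occupies a consistent prefix of ranks, but spelling out that this yields a globally sorted permutation at termination is the main obstacle I would treat explicitly.
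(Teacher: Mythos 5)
Your lower bound is correct and is a genuinely different (and equally clean) argument from the paper's: you count the elements that are never relocated and observe they form an increasing subsequence, whereas the paper shows that each move can increase $\lis$ by at most one. Both work, and yours has the mild advantage of being insensitive to which moves are permitted.

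The upper bound, however, has a genuine gap: you have silently dropped the monotonicity requirement. In this section every block move --- including the unit-block moves counted by $\bc^1$ --- must be monotone, i.e.\ a rightward relocation $(i,i,k)$ is admissible only if $\pi_i$ exceeds every element it jumps over (the paper's own proof says ``consider a monotone move,'' and the corollary $\bc(\pi)\le n-\lis(\pi)$ that this lemma feeds into is false without it). Your reinterpretation ``pick one element and relocate it to an arbitrary other position'' is therefore too generous, and your construction really does produce forbidden moves. Concretely, take $\pi=[3\ 1\ 4\ 2]$ and fix the longest increasing subsequence $I=(1,2)$. Processing the remaining elements in increasing value order, you first relocate $3$ to sit immediately after $2$; that is the move $(1,1,4)$, which carries $3$ rightward past $4$. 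Since $3<4$ this move is not monotone, and indeed $3$ and $4$ would have to cross back later, so the whole solution violates monotonicity. The paper avoids exactly this trap by selecting the element to move by \emph{position} rather than by value: it takes the rightmost element outside $S$ that exceeds its successor in $S$ and moves it rightward (so everything it passes over is smaller), and only failing that moves a leftmost offending element leftward. To repair your argument you would need to replace the increasing-value processing order by such a position-based rule and verify monotonicity of each relocation; as written, you have proved the statement only for unrestricted single-element relocations, which is weaker than what the lemma (and the corollary depending on it) requires.
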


\begin{proof}
$\bc^1(\pi) \ge n - \lis(\pi)$. Consider a monotone move
$\sigma=(i, i, k)$ in $\pi$ ($\sigma = (i,k-1,k)$ is symmetric).
Let $\tilde{\pi} = [\pi_1,\dots,\pi_{i-1},\pi_{i+1},\dots,\pi_n]$
be the permutation $\pi$ without element $\pi_i$. Clearly,
$\lis(\tilde{\pi}) \le \lis(\pi)$. If we apply $\sigma$, the resulting
permutation
$\sigma\pi = [\pi_1, \dots, \pi_{i-1}, \pi_{i+1}, \dots, \pi_k, \pi_i,
\pi_{k+1}, \dots, \pi_n]$
has one extra element compared to $\tilde{\pi}$, and, therefore,
$\lis(\sigma\pi) \le \lis(\tilde{\pi}) + 1$. Hence,
$\lis(\sigma\pi) \le \lis(\pi) + 1$, that is, the length of the longest increasing
subsequence cannot increase by more than one in a move. The inequality
follows since $\lis(\tau) = n$ for the identity permutation $\tau$.

$\bc^1(\pi) \le n - \lis(\pi)$. Let $S = [\dots s_1 \dots s_2
\dots s_{\lis} \dots]$ be a fixed longest increasing
subsequence in $\pi$.
We show how to choose a move that increases the length of $S$.
Let $\pi_i \notin S$ be the rightmost
element (that is, $i$ is maximum) lying between elements $s_j$ and $s_{j+1}$ of
$S$ so that $\pi_i > s_{j+1}$. We move $\pi_i$ rightwards to its proper
position $p_i$ inside $S$. This is a monotone move, as $\pi_i$ was chosen
rightmost.  If no such element $p_i$ exists, we symmetrically choose
the leftmost $p_i$ with $p_i < s_j$ and bring it into its proper
position in $S$. In both cases $S$ grows.
\end{proof}


\begin{corollary}
Any permutation can be sorted by $n - \lis(\pi)$ monotone block moves.
\end{corollary}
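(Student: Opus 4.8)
The plan is to derive the corollary directly from the preceding lemma, observing that the restricted sorting procedure it supplies already consists entirely of monotone block moves, so no new construction is needed. The key point is that a block move of type $(i,i,k)$ or $(i,k-1,k)$—in which one of the two exchanged blocks has unit size—is a perfectly legitimate monotone block move in the unrestricted setting; the class of moves counted by $\bc^1$ is simply a subclass of all monotone moves. Consequently any upper bound achieved with the restricted moves transfers verbatim to the general problem, giving $\bc(\pi) \le \bc^1(\pi)$.

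First I would recall that the upper-bound direction of the lemma does more than assert $\bc^1(\pi) \le n - \lis(\pi)$: it exhibits an explicit sequence of moves sorting $\pi$, where a fixed longest increasing subsequence $S$ is grown by one element in each step, and each such step is verified to be monotone (the single moved element is chosen rightmost, respectively leftmost). Since the identity permutation has $\lis = n$ and every step lengthens $S$ by exactly one, this procedure terminates after precisely $n - \lis(\pi)$ monotone moves with $\pi$ sorted. Thus the lemma already furnishes a valid sorting sequence of the required length using only monotone block moves.

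Combining these observations yields $\bc(\pi) \le \bc^1(\pi) = n - \lis(\pi)$, which is exactly the claim. I expect essentially no obstacle here: the only thing to check is that dropping the unit-size restriction cannot force more moves, and this is immediate because every restricted move is itself an admissible monotone move. The genuine content lies in the lemma; the corollary merely records that its constructive upper bound is, in fact, a statement about monotone block moves in general.
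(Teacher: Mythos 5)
Your proposal is correct and matches the paper's (implicit) reasoning: the paper states the corollary without a separate proof precisely because every unit-block move counted by $\bc^1$ is itself a monotone block move, so $\bc(\pi) \le \bc^1(\pi) = n - \lis(\pi)$ follows immediately from the lemma. Your additional observation that the lemma's constructive procedure explicitly produces the required monotone sequence is a faithful reading of the same argument.
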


\section{Block Crossings on a Path}
\label{sec:path}

Now we consider an embedded graph $G = (V,E)$ consisting of a path $P
= (V_P,E_P)$ with attached terminals. In every node $v\in V_P$ the clockwise
order of terminals adjacent to $v$ is given, and we assume
the path is oriented from left to right. We say that a line $l$ starts
at $v$ if $v$ is the leftmost vertex on $P$ that
lies on~$l$ and ends at its rightmost vertex of the path.
As we consider only
crossings of lines sharing an edge, we assume that the terminals
connected to any path node $v$ are
in such an order that first lines end at $v$ and then lines start at
$v$; see Fig.~\ref{fig:alg-path-edge-step}.

\begin{wrapfigure}[9]{r}{.29\textwidth}
  \centering
  \includegraphics{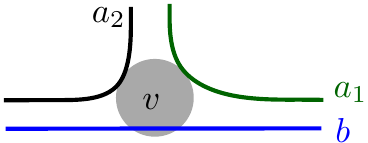}
  \caption{Inheritance of a good pair above node $v$.}
  \label{fig:inheritance}
\end{wrapfigure}
We suggest a 3-approximation algorithm for BCM. Similar
to the single edge case, the basic idea of the algorithms is to
consider
good pairs of lines. A \emph{good pair} is, intuitively, an ordered pair of lines that will be
adjacent---in this order---in any feasible solution when one of the lines ends.
We argue that our algorithm creates at least one additional good pair
per block crossing, while even the optimum creates at most three new
good pairs per crossing.  To describe our algorithm we first define
good pairs.

\begin{definition}[Good pair]~
  \label{def:good_pair}
  \begin{compactenum}[(i)]
    \item If two lines $a$ and $b$ end on the same node, and
      $a$ and $b$ are consecutive in clockwise order, then
      $(a,b)$ is a \emph{good pair} (as it is in the case of a single edge
      in Sec.~\ref{sec:edge}).

    \item Let $v$ be a node with edges $(u,v)$ and $(v,w)$ on $P$,
      let $a_{1}$ be the first line starting on $v$ above $P$,
      and let $a_{2}$ be the last line ending on $v$ above $P$ as in
      Fig.~\ref{fig:inheritance}.  If $(a_{1}, b)$ is a good pair,
      then $(a_{2}, b)$ also is a good pair.  We say that $(a_{2},b)$
      is \emph{inherited} from $(a_{1},b)$, and identify $(a_{1},b)$
      with $(a_{2},b)$, which is possible as $a_1$ and $a_2$ do not
      share an edge. Analogously, there is inheritance for lines
      starting/ending below~$P$.
  \end{compactenum}
\end{definition}

As a preprocessing step, we add a virtual line $t_{e}$ ($b_{e}$) for each edge
$e\in E_P$. The line $t_{e}$ ($b_{e}$) is the last line starting before $e$,
and the first line ending after $e$ to the top (bottom). Although
virtual lines are never moved, $t_{e}$ ($b_{e}$) does participate in
good pairs, which model the fact that the first lines
ending after an edge should be brought to the top (bottom).

There are important properties of good pairs.

\begin{lemma}
On an edge $e \in E_P$ there is, for each line $l$, at most one
good pair $(l',l)$ and at most one good pair $(l,l'')$.
\end{lemma}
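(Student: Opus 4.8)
The plan is to show that on a fixed edge $e$, the good pairs restricted to $e$ form a structure in which each line appears at most once as the first component and at most once as the second component---in other words, they define a partial matching that respects the linear order on $e$. First I would recall what it means for a good pair $(l',l)$ to be present \emph{on} the edge $e$: by Definition~\ref{def:good_pair}, a good pair originates when two lines end on a common node and are clockwise-consecutive there, and is then propagated leftward along $P$ via inheritance. So I would fix $e=(u,v)$ and trace, for each good pair that ``lives'' on $e$, the node at which it was created and the chain of inheritances that carries it down to $e$.

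The core of the argument is to suppose, for contradiction, that there are two distinct good pairs $(l', l)$ and $(l'', l)$ both present on $e$ and sharing the second component $l$; I would then derive a contradiction from the clockwise-consistency of line orders together with the inheritance rule. The key observation is that a good pair $(l',l)$ on $e$ forces $l'$ to be the immediate clockwise neighbor of $l$ in the line order along $e$ (this is precisely the ``adjacent, in this order'' content of the definition). Two good pairs $(l',l)$ and $(l'',l)$ would therefore force both $l'$ and $l''$ into the single position immediately clockwise of $l$, which is impossible unless $l'=l''$; the symmetric argument handles the case of two good pairs $(l,l')$ and $(l,l'')$ sharing the first component. To make this rigorous I would use the path terminal and path intersection properties to ensure the relevant lines are genuinely present and comparable on $e$, and I would invoke the fact that inheritance identifies $(a_1,b)$ with $(a_2,b)$ without $a_1,a_2$ sharing an edge, so that ``at most one good pair on $e$'' is not violated by the identification itself.

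The main obstacle I expect is bookkeeping the distinction between where a good pair is \emph{created} and where it is \emph{present} after inheritance: a naive count could double-count a pair that is inherited, or miss that a newly created pair at $v$ replaces an inherited one rather than coexisting with it. I would handle this by arguing that at each node $v$ exactly one line per side ends last and one starts first, so inheritance takes a single incoming good pair of line $b$ and produces a single outgoing one; no node can both inherit two distinct good pairs for the same $b$ and pass them both down. Thus the ``at most one'' invariant is preserved edge-by-edge as we move left along the path. Once this is established, the statement for both $(l',l)$ and $(l,l'')$ follows immediately, since the creation rule at a single terminal node also produces at most one clockwise-consecutive partner on each side of a given line.
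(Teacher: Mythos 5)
There is a genuine gap, and it sits at the center of your argument. Your stated key observation --- that a good pair $(l',l)$ present on $e$ forces $l'$ to be the immediate clockwise neighbor of $l$ in the line order along $e$ --- misreads Definition~\ref{def:good_pair}. The ``consecutive in clockwise order'' condition in case~(i) refers to the fixed cyclic order of \emph{terminals} around the node where the two lines end, not to the order of the lines on the edge $e$, which is exactly what the algorithm is still constructing. A good pair records that two lines \emph{ought to be brought} together by the time one of them ends; on a given edge its two lines may be far apart in every intermediate order, and the algorithm may even have to \emph{break} good pairs (cf.\ Fig.~\ref{fig:2-sided-destruction}). Consequently two good pairs $(l',l)$ and $(l'',l)$ do not compete for a single physical position next to $l$, and the contradiction you derive from that competition does not arise. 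The lemma is a purely combinatorial statement about how pairs are created and inherited, not about adjacency in any realized order.

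Your third paragraph does gesture at the correct mechanism --- a right-to-left induction using the uniqueness of the last ending and first starting line at each node --- but it is not carried far enough to close the argument. The paper's proof takes the rightmost edge $e$ on which some line $l$ has two good pairs $(l',l)$ and $(l'',l)$. If $l$ ends at the right endpoint of $e$, only case~(i) applies and the clockwise predecessor among the lines ending there is unique. Otherwise $l$ continues onto the next edge $e'$, and the decisive step is the one missing from your sketch: at most one of the two pairs can be newly produced by inheritance at the node between $e$ and $e'$ (since the last line ending above $P$ there is unique), and \emph{both} the other pair and the pre-image $(l''',l)$ of the inherited pair already exist on $e'$, so $e'$ is also a violating edge --- contradicting the choice of $e$. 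Your formulation ``inheritance takes a single incoming good pair and produces a single outgoing one'' rules out two pairs both arising by inheritance at the same node, but not the coexistence of one inherited pair with one pair that simply persists from $e'$; excluding that configuration requires invoking the induction hypothesis on $e'$ via the pre-image of the inherited pair, which your proposal never does.
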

\begin{proof}
Let $e$ be the rightmost edge where there is a line $l$ that
violates the property, that is, there are two good pairs $(l',l)$
and $(l'',l)$ (symmetrically for $l$ on the top).
If $l$ ends after $e$ there clearly can be at most one of
these good pairs. Suppose that $l$ also exists one the edge
$e'$ right of $e$. If both $l'$ and $l''$ exist on
$e'$, we would already have a counterexample on $e'$. Hence, at
least one of the lines ends after $e$, that is, at least one of the
good pairs results from inheritance after the edge $e$. On the
other hand, this can only be the case for one of both, suppose
for $(l',l)$. There has to be another good pair $(l''',l)$ on
$e'$, a contradiction to the choice of $e$.
\end{proof}

\begin{lemma}
If $e \in E_P$ is the last edge before line $l$ ends to the top
		(bottom), then there exists a line $l'$ ($l''$) on $e$ that forms a good
		pair $(l',l)$ ($(l,l'')$) with $l$.
\end{lemma}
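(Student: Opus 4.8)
The plan is to reduce to the top case (the bottom case is symmetric under reflection of $P$) and then argue directly at the node where $l$ ends, distinguishing two configurations and invoking the two ingredients available for creating good pairs: Definition~\ref{def:good_pair}(i) and the virtual lines. Write $v$ for the right endpoint of $e$, so that $l$ terminates at a top terminal of $v$ and traverses $e$ as its last path edge. By the assumed ordering of terminals at $v$ (first the ending lines, then the starting lines) together with the top/bottom split, the lines that end to the top at $v$ occupy a clockwise-consecutive block of terminals, and every such line reaches $v$ along the path and hence lies on $e$.

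First I would treat the generic case, in which $l$ is not the outermost line of this top-ending block, that is, there is a line $a$ ending to the top at $v$ that is the clockwise neighbour of $l$ on the side away from the path. Then $a$ and $l$ end on the same node and are consecutive in clockwise order, so by Definition~\ref{def:good_pair}(i) the ordered pair $(a,l)$ is a good pair; as noted, $a$ lies on $e$, so taking $l'=a$ settles this case.

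It remains to handle the boundary case, in which $l$ is the outermost line ending to the top at $v$, so that no real line ending at $v$ lies immediately above it. Here I would use the virtual line $t_e$, which by construction occupies the topmost slot on $e$, just below the first line ending to the top after $e$. Since $e$ is the last edge before $l$ ends to the top and $l$ is the outermost such line, $l$ is precisely this first line; the virtual line therefore forms the good pair $(t_e,l)$, modelling that $l$ must be brought to the top, and $t_e$ lies on $e$ by definition. Taking $l'=t_e$ completes the argument, and the bottom case follows verbatim with $b_e$ and the orientation $(l,l'')$.

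The main obstacle I anticipate is not the case split itself but getting the orientation of the good pair right: I must check that in the generic case the clockwise neighbour $a$ really plays the role of the first coordinate $l'$ rather than the second, which hinges on fixing once and for all how the clockwise order at $v$ is read off relative to the left-to-right orientation of $P$ and where "top" sits in that order. A secondary point is to verify that the top-ending lines at $v$ are genuinely clockwise-consecutive, so that the two cases are exhaustive and $t_e$ indeed sits adjacent to $l$ in the boundary case; both reduce to the stated conventions on terminal orderings and should require neither inheritance (Definition~\ref{def:good_pair}(ii)) nor any property of edges to the right of $e$, since $l$ terminates at $v$ and has no continuation along $P$.
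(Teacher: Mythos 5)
Your proof is correct and follows essentially the same route as the paper's: a case split on whether $l$ has a clockwise predecessor among the lines ending to the top at $v$ (giving a good pair via Definition~\ref{def:good_pair}(i)) versus $l$ being outermost, in which case the virtual line $t_e$ supplies the good pair. Your extra care about the orientation of the pair and the consecutiveness of the top-ending block is fine but not a substantive departure.
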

\begin{proof}
We suppose that $l$ ends to the top; the other case is analogous. Let
$e = (u,v)$. We consider the clockwise order of lines ending around
$v$. If there is a predecessor $l'$ of $l$, then by case (i) of the
definition $(l',l)$ is a good pair. Otherwise, $l$ is the topmost line
ending at $v$. The virtual line $t_e$ that we added is its predecessor,
and $(t_i,l)$ is a good pair.
\end{proof}

In what follows, we say that a solution (or algorithm) creates a good
pair in a block crossing if the two lines of the good pair are brought
together in the right order by that block crossing; analogously, we
speak of breaking good pairs.
It is easy to see that any solution, especially an optimal one, has to create all good
pairs, and a block crossing can create at most three new pairs.
There are only two possible ways for creating a good pair $(a,b)$:
(i) $a$ and $b$ start at the same node consecutively in the right
order, that is, they form an \emph{initial good pair}, or (ii)
a block crossing brings $a$ and $b$ together.
Similarly, good pairs can only be destroyed by crossings before both lines end.

\begin{lemma}
There are only two possibilities to create a good pair
    $(a,b)$:
    \begin{enumerate}[(i)]
      \item $a$ and $b$ start at the same node consecutively in the
        right order.
      \item A block crossing brings $a$ and $b$ together.
    \end{enumerate}
\end{lemma}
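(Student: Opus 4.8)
The plan is to read a fixed solution from left to right along the path $P$, and within each edge from $\pi^0(e)$ to $\pi^{t(e)}(e)$, and to interpret ``creating the good pair $(a,b)$'' as the first moment at which the two real lines $a$ and $b$ become adjacent with $a$ directly above $b$. I would then enumerate the elementary operations that can change the adjacency of two lines: a block move on an edge, and, at each node $v$, the removal of the lines ending at $v$ followed by the insertion of the lines starting at $v$ (recall that first lines end and then lines start). A block move that brings $a$ and $b$ together is precisely case (ii), so the whole argument reduces to understanding the node events.

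The key structural observation I would establish first is that lines can enter or leave only at the extremes. In the edge-crossing model two lines sharing an edge may not cross at a vertex, so a line ending at $v$ above $P$ cannot have a continuing line above it; hence the lines ending above $P$ form the topmost block on the incoming edge and the lines starting above $P$ form the topmost block on the outgoing edge (symmetrically below $P$). Two consequences are immediate. First, a removal deletes a top (or bottom) block and merely exposes the previously outermost continuing line; it never brings two continuing lines into contact. Second, an insertion only adds a new outermost block. Therefore the only adjacencies that a node event can newly create are (a) between two lines that both start at $v$, and (b) between the innermost starting line $a_1$ and the outermost continuing line $b$.

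Adjacency of type (a) in the correct order is exactly case (i), the initial good pair. For adjacency of type (b) I would invoke the inheritance clause of Definition~\ref{def:good_pair}: the pair $(a_1,b)$ is inherited from the pair $(a_2,b)$ on the incoming edge, where $a_2$ is the last line ending above $P$ at $v$; if no real line ends there, the role of $a_2$ is played by the virtual top line, which is exactly the reason the virtual lines $t_e$ (and $b_e$) were introduced. Since $(a_1,b)$ is identified with $(a_2,b)$, this adjacency continues an already existing good pair across $v$ rather than creating a new one, and so it does not count as a fresh creation. Putting the cases together, every fresh creation of a good pair between two real lines is either an initial good pair (i) or is produced by a block move (ii).

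I expect the main obstacle to be exactly the boundary case (b): arguing rigorously that the adjacency between a starting line and the outermost continuing line is always an inheritance and never a genuinely new good pair. This is where the virtual lines and the inheritance clause must be used carefully, in particular in the subcase where nothing ends on the relevant side of $P$, so that the good pair is really a pair with the virtual line and was created earlier (or is an initial pair). By contrast, the enumeration of operations and the ``extremes-only'' property are routine once the model has been set up, and the symmetric statement about breaking good pairs follows by reading the same argument backwards.
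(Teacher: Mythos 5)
Your proof is correct and follows essentially the same route as the paper's: block moves account for case (ii), and the only crossing-free way to make $a$ and $b$ adjacent while one of them starts at a node is either an initial pair (case (i)) or an adjacency with the outermost continuing line, which by the inheritance identification of Definition~\ref{def:good_pair} is the same good pair as one that already existed (and was created) before that node. Your explicit ``ending/starting lines occupy the extremes'' observation just spells out what the paper compresses into ``lines between $a$ and $b$ have to leave the path''; the only nit is that you reverse the paper's naming convention ($(a_2,b)$ is said to inherit from $(a_1,b)$, not the other way around), which is immaterial since the two pairs are identified.
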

\begin{proof}
      During $a$ and $b$ exist, the good pair $(a,b)$ can only be
      created by block crossings because either $a$ and $b$ have to
      cross each other or lines between $a$ and $b$ have to leave the path.
      Hence, $(a,b)$ can only be created without a block crossing at
      the moment when the last of the two lines, say $a$, starts at a
      node $v$. In this case $a$ has to be the first line starting at
      $v$ on the top of $P$. This implies that by inheritance
      there is a good pair $(c,b)$, where $c$ is the last line ending
      at $v$ to the top. It follows that the good pair $(c,b)$, which
      is identical to $(a,b)$, existed before $v$.
\end{proof}

\begin{figure}[t]
  \begin{minipage}[b]{0.47\textwidth}
    \centering
    \includegraphics[scale=.96]{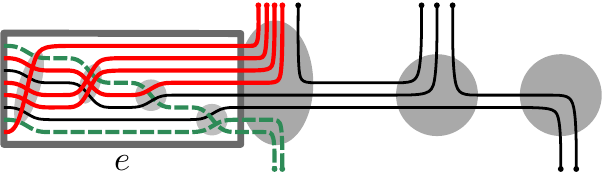}
    \caption{Ordering the lines on edge $e$ in a step of the algorithm.}
    \label{fig:alg-path-edge-step}
  \end{minipage}
  \hfill
  \begin{minipage}[b]{0.51\textwidth}
    \centering
    \includegraphics[height=1.8cm]{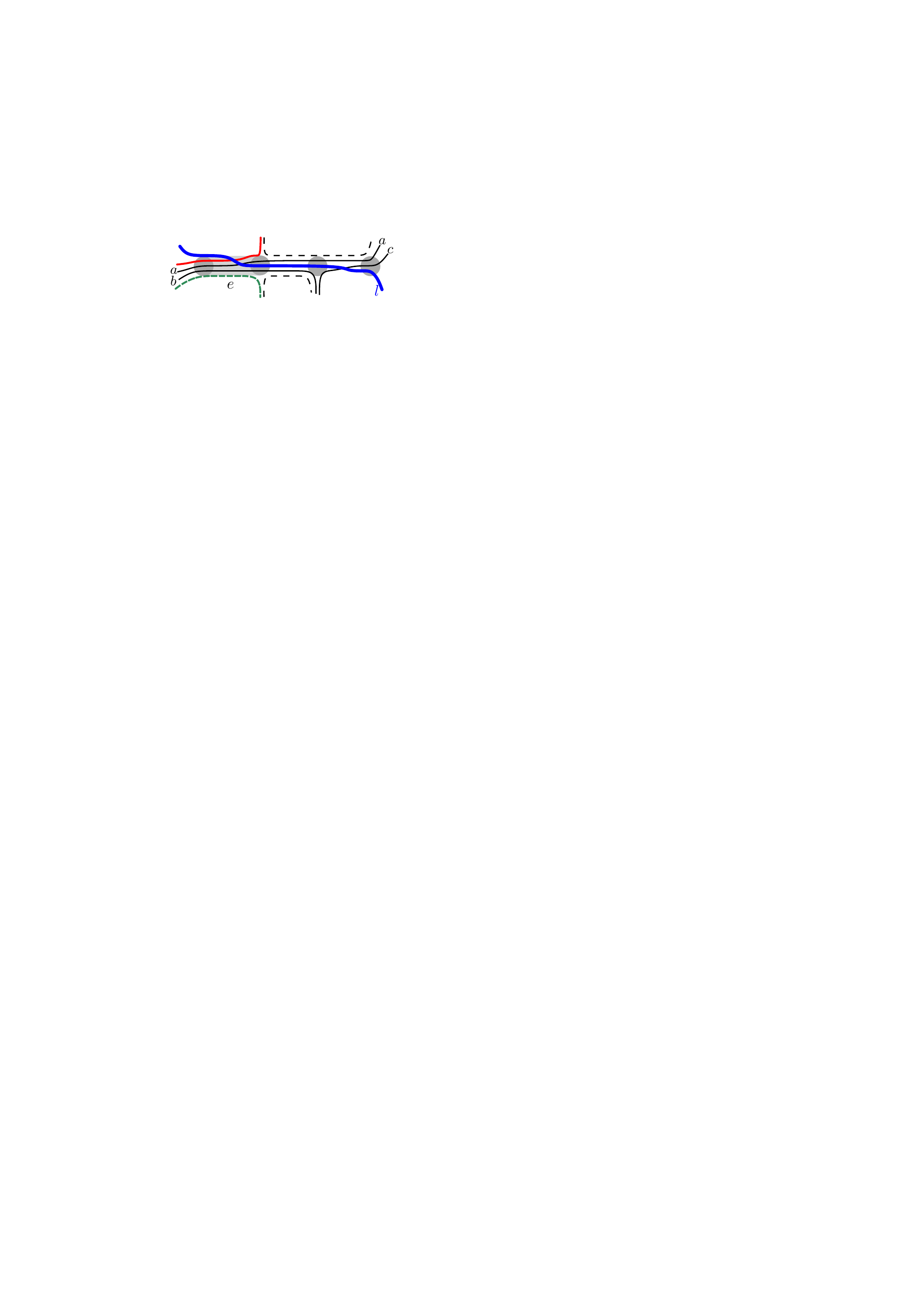}
    \caption{The (necessary) insertion of line $l$ forces
    breaking the good pair $(a,b)$ ($\equiv(a,c)$) on edge~$e$.}
    \label{fig:2-sided-destruction}
  \end{minipage}
\end{figure}

Using good pairs, we formulate our algorithm as follows; see
Fig.~\ref{fig:alg-path-edge-step} for an example.

We follow $P$ from left to right. On an edge $e = (u,v)$ there are red
lines that end at $v$ to the top, green lines that end at $v$ to the
bottom, and black lines that continue on the
next edge. We bring the red lines in the right order to the top by
moving them upwards. Doing so, we keep existing good pairs together.
If a line is to be moved, we consider the lines below it consecutively.
As long as the current line forms a good pair with the next line, we
extend the block that will be moved. We stop at the first line that
does not form a good pair with its successor. Finally we move the whole block of
lines linked by good pairs in one block move to the top.
Next, we bring the green lines in the right order to the bottom, again
keeping existing good pairs together.
There is an exception, where one good pair on $e$ cannot be
kept together. If the moved block is
a sequence of lines containing both red and
green lines, and possibly some---but not all---black lines, then it
has to be broken; see Fig.~\ref{fig:2-sided-destruction}.
Note that this can only happen in the last move on an edge.
There are two cases:
\begin{asparaenum}[(i)]
  \item A good pair in the sequence contains a black line and
    has been created by the algorithm previously. We break the
    sequence at this good pair.
	\item All pairs with a black line are initial
		good pairs, that is, were not created by a crossing. We break at the pair
		that ends last of these. Inheritance is also considered,
    that is, a good pair ends only when the last of the pairs that are
    linked by inheritance ends.
\end{asparaenum}

After an edge has been processed, the lines ending to the top and to
the bottom are on their respective side in the right relative order.
Hence, our algorithm produces a feasible solution. We
show that it produces a $3$-approximation for the number of block
crossings. A key property is that our strategy for case~(ii)
is optimal.

\begin{theorem}
  Let $\bcalg$ and $\bcopt$ be the number of block crossings created by the
	algorithm and an optimal solution, respectively. Then, $\bcalg \le 3 \bcopt$.
  \label{thm:3-approx-path}
\end{theorem}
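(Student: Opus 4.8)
The plan is to run a charging argument in which every block crossing is paid for by the creation of a good pair, and to compare the algorithm against the optimum through the total number of good pairs each must create. Write $C_{\mathrm{alg}}$ and $C_{\mathrm{opt}}$ for the total number of good-pair creations performed by the algorithm and by a fixed optimal solution, where a good pair that is broken and later restored is counted once per creation. The two facts already isolated in the excerpt---that a block crossing creates at most three good pairs and that every feasible solution must create every non-initial good pair---give the lower bound $C_{\mathrm{opt}} \le 3\,\bcopt$, i.e. $\bcopt \ge C_{\mathrm{opt}}/3$. The whole argument then reduces to the two inequalities $\bcalg \le C_{\mathrm{alg}}$ and $C_{\mathrm{alg}} \le C_{\mathrm{opt}}$, from which $\bcalg \le C_{\mathrm{alg}} \le C_{\mathrm{opt}} \le 3\,\bcopt$ follows.

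For $\bcalg \le C_{\mathrm{alg}}$ I would show that each block move made by the algorithm creates at least one new good pair. Consider the move that brings a maximal good-pair-linked block of red lines to the top of an edge $e=(u,v)$: the line placed outermost becomes adjacent, in the correct clockwise order, to the material already sorted to the top, so a new good pair with the boundary is created (possibly with the virtual line $t_e$, whose partner exists by the lemma on good pairs of ending lines); the green case is symmetric. Since the moved blocks are chosen maximal with respect to good-pair links, no crossing is wasted, so every block move of the algorithm is charged to a fresh creation and $\bcalg \le C_{\mathrm{alg}}$.

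The heart of the proof, and the step I expect to be the main obstacle, is $C_{\mathrm{alg}} \le C_{\mathrm{opt}}$. If the algorithm never broke a good pair, then $C_{\mathrm{alg}}$ would equal the number of non-initial good pairs, which every solution must create, and the inequality would be immediate. The difficulty is the break of case~(ii): when a block to be moved contains both red and green lines (and possibly some but not all black lines), the red part must ascend and the green part descend, so the block is forced to split, an internal good pair is destroyed, and it must be recreated later---inflating $C_{\mathrm{alg}}$. I would argue that any feasible solution is likewise forced to separate such a block (the configuration of Fig.~\ref{fig:2-sided-destruction} cannot be kept intact once one line goes up and another down), hence also incurs an extra creation at this configuration; this is precisely the claim that the case-(ii) breaking rule is optimal. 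Concretely, I would show by an exchange argument over the lines crossing the splitting point that the good pair the algorithm selects to break---the created good pair through a black line in subcase~(i), or the initial good pair that survives longest under inheritance in subcase~(ii)---is one whose destruction, and thus a later recreation, the optimum cannot avoid either, so each forced break of the algorithm is matched by a genuinely unavoidable extra good-pair creation in the optimum. Verifying that breaks occur only in the last move on an edge and handling inheritance correctly (identifying $(a_1,b)$ with $(a_2,b)$ across a node) are the technical points that make this matching, and hence $C_{\mathrm{alg}} \le C_{\mathrm{opt}}$, valid.
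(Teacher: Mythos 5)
Your overall charging scheme (at least one good pair gained per algorithm crossing, at most three per optimal crossing) is the same as the paper's, but the middle link of your chain, $C_{\mathrm{alg}} \le C_{\mathrm{opt}}$, does not hold, and the culprit is precisely the case-(i) breaks that you propose to handle by matching them against forced breaks of the optimum. Because $C$ counts creations with multiplicity, and every non-initial good pair must be created once by \emph{any} feasible solution while every break of a still-live pair forces one additional re-creation, the difference $C_{\mathrm{alg}} - C_{\mathrm{opt}}$ is exactly the difference in the number of break events. A case-(i) break destroys a pair that is adjacent in the algorithm's current order only because an \emph{earlier algorithm crossing} produced it as a side effect; the optimal solution need never realize that adjacency on that edge, so its maximal good-pair-linked sequences there may contain no red--green conflict and it incurs no break whatsoever. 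Your claim that the optimum ``cannot avoid'' destroying the case-(i) pair is therefore false, and with it $C_{\mathrm{alg}} \le C_{\mathrm{opt}}$.

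The paper avoids this by counting \emph{distinct} good pairs rather than creation events. A case-(i) break is charged against the slack of the earlier crossing that created the broken pair as a by-product: that crossing was already paid for by the red or green pair it was designed to create, so the side-effect pair can be retroactively discounted and each non-case-(ii) crossing still nets one distinct pair, giving $\gp \ge \bcalg - \bciialg + \gpinit$. Only the case-(ii) breaks are matched against breaks the optimum is forced to make, via $\gp \le 3\bcopt - \bciiopt + \gpinit$ together with $\bciialg \le \bciiopt$. That last inequality is where an exchange argument of the kind you sketch belongs, and it relies on the specific tie-breaking rule (break the initial pair that ends last, inheritance included) to guarantee that whenever the optimum reuses an already-broken pair on a later edge of the split sequence, the algorithm's preferred pair also survives to that edge and is reused there. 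So your case-(ii) analysis is in the right spirit, but you must replace the creation-count comparison by the distinct-pair count and amortize case (i) against side-effect creations rather than against the optimum.
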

\begin{proof}
  Normal block crossings, not breaking a good pair in the algorithm,
  always increase the number
  of good pairs. If we have a block crossing that breaks a good pair in a
  sequence as in case~(i) then there has been a block crossing
  that created the good pair previously as a side effect, that is, there
  was an additional (red or green) good pair whose creation caused
  that block crossing. Hence, we can say that the destroyed good pair
  did not exist previously and still have at least one new good pair
  per block crossing.

  If we are in case~(ii), that is, all good pairs in the sequence are
  initial good pairs; see Fig.~\ref{fig:2-sided-destruction}, then
  they also initially existed in the optimal solution. It is
  not possible to keep all those good pairs because
  the remaining black lines have to be somewhere between the block of
  red lines and the block of green lines. Hence, even the optimal
  solution has to break one of these good pairs on this edge or
  previously.

  Let $\bciialg$, $\bciiopt$ be the number of broken good pairs due to
  case (ii) in the algorithm and the optimal solution. In a crossing in
	which the algorithm breaks such a good pair the number of good pairs
  stays the same as one good pair is destroyed and another created. On
  the other hand, in a crossing that breaks a good pair, the number of
  good pairs can be increased by at most two even in the optimal
  solution (actually, it is not hard to see that it cannot be increased at
  all). Let $\gp$ be the total number of good pairs and let $\gpinit$
  be the number of initial good pairs. Note that according to
  Def.~\ref{def:good_pair} good pairs resulting from inheritance
  are not counted separately for $\gp$ as the are identified with another good
  pair. We get $\gp \ge \bcalg - \bciialg + \gpinit \text{~and~}
  \gp \le 3 \cdot \bcopt - \bciiopt + \gpinit$.
  Hence, $\bcalg \le 3 \bcopt + (\bciialg - \bciiopt)$ combining
  both estimates.

  To prove the approximation factor 3 we only have to show that $\bciialg \le
  \bciiopt$. First, note that the edges where good pairs of case~(ii)
  are destroyed, are exactly the edges where such a sequence
  of initial good pairs exists; that is, the edges are independent of
  any algorithm or solution. We show that, among these
  edges, our strategy ensures that the smallest
  number of pairs is destroyed, and pairs that are destroyed
  once are reused as often as possible for breaking a sequence of
  initial good pairs. To this end, let $e_{1}', \dots,
  e_{\bciialg}'$ be the sequence of edges, where the algorithm
  destroys a new good pair of type~(ii), that is, an initial good pair
  that has never been destroyed before. We follow the sequence and
  argue that the optimal solution destroys a new
  pair for each of these edges. Otherwise, there is a pair $e', e''$
  of edges in the sequence, where the
  optimal solution uses the same good pair $p$ on both edges. Let
  $p',p''$ be the pairs used by the algorithm on $e', e''$ for
  breaking a sequence of initial good pairs.
  As $p'$ was preferred by the algorithm over $p$, we know that $p'$
  still exists on $e''$.
  As it is in a sequence with $p$, the algorithm does, therefore,
  still use $p'$ on $e''$, a contradiction completing the proof.
\end{proof}

The algorithm needs $O(|L|(|L| + |E_{P}|))$ time. Note that it does
normally not produce orderings with monotone block crossings. It can,
however, be turned into a 3-appro\-xi\-ma\-tion algorithm for MBCM. To this end,
the definition of inheritance of good pairs, as well as the step of destroying good
pairs has to be adjusted, and the analysis has to be improved.

\subsubsection{Monotone Block Crossings on Paths.}
\label{sec:pathr-appendix}
We want to modify our algorithm so that
it produces monotone block crossings which are a $3$-approximation for
the minimum number of monotone block crossings in $O(|L|(|L| +
|E_P|))$ time. To this end, we first
have to modify our definition of inheritance of good pairs such that
we use only monotone block moves.
More specifically, we prevent inheritance in some situations in which
keeping a pair of lines together close to a vertex is not possible
without having a forbidden crossing. We concentrate on inheritance
with lines ending to the top; the other case is analogue.

\begin{wrapfigure}{r}{.29\textwidth}
  \centering
  \includegraphics{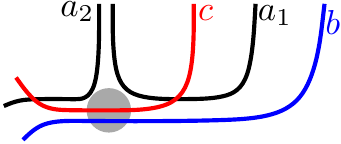}
  \caption{Line $c$ prevents that
  $(a_{2},b)$ inherits from $(a_{1},b)$.}
  \label{fig:monotone-no-inheritance-app}
\end{wrapfigure}
Suppose we have a situation as shown in
Fig.~\ref{fig:monotone-no-inheritance-app}. Line $c$ must not cross
$b$. On the other hand it has to be below $a_{2}$ near node $v$ and
separate $a_{2}$ and $b$ there. Hence, bringing or keeping $a_{2}$ and
$b$ together is of no value, as they have to be separated in any
solution. We say that line $b$ is \emph{inheritance-preventing} at
node~$v$.

One part of our algorithm still needs to be changed in order to ensure
monotonicity of the crossings. A block move including black lines
could result in a forbidden crossing. We focus on the case, where black
lines are moved together with red lines. This can only occur once per
edge. Let $r = b_0, b_{1}, \ldots, b_k$ be the sequence of good pairs from
the bottommost red line $r=b_0$ on. If there is some line $l$ above
the block that must not be crossed by a line $b_{i}$ of the block,
then we have to break the sequence. We consider such a case in which
$i$ is minimal.  Hence, we have to break a good pair out of $(r,
b_{1}), (b_{1}, b_{2}), \ldots, (b_{i-1}, b_{i})$. Similar to case
(i) in the algorithm of the previous section, we break a pair of this
sequence that is not initial. Otherwise (case (ii)), we choose the pair
$(b_{j-1}, b_{j})$ with $j \le i$ minimal such that the end node of
$b_{j}$ is below the path, and break the sequence there. Note that
line $l$ must end below the path, otherwise it would prevent
inheritance of at least one of the good pairs in the sequence. Hence,
also $b_{i}$ ends below the path, and $b_{j}$ is well-defined.

It is easy to see that our modified algorithm still produces a feasible
ordering. We now show that it is also monotone.

\begin{theorem}
  The algorithm produces an ordering with monotone block crossings.
  \label{theorem:algmonotone}
\end{theorem}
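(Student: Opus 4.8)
The plan is to show that for every pair of lines $a,b$ sharing an edge the modified algorithm reverses their relative vertical order at most once, and only toward the order they must have when the first of them leaves their shared segment. Since two lines cross exactly when their relative order is reversed, this yields monotonicity. I would first record the \emph{target order} of each pair, which is fixed by the terminal positions: on the last edge the pair shares, the line ending to the top must lie above the one ending to the bottom, and the required order on every earlier common edge is determined accordingly. A block crossing reverses the order of $a$ and $b$ precisely when one of them lies in the moving block and the other among the lines it passes, so it suffices to prove that every such elementary reversal agrees with the target order and is never undone.

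The algorithm performs only two kinds of moves: it lifts blocks of red lines (possibly with trailing black lines) to the top, and symmetrically pushes green blocks to the bottom. For the \emph{pure} part of a move---red lines raised past black and green lines---monotonicity is immediate: a red line is lifted only on the edge just before it ends to the top, it must end above every line it passes, and it is raised exactly once, so each crossing it creates is correct and final; the symmetric statement holds for green lines. I would therefore reduce the whole claim to controlling the crossings produced by the black lines that are dragged along inside a lifted (or lowered) block via good pairs.

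The heart of the argument is the interaction of the two modifications. Suppose some line $l$ lying above a lifted red block must not be crossed by a block element $b_i$. I would first argue that $l$ must end below the path: otherwise $l$ would separate a good pair of the dragged sequence near the vertex and hence be inheritance-preventing, so the modified inheritance would already have broken that pair and $b_i$ would not be in the block. Since $l$ is above the block, ends below the path, and $b_i$ may not cross it, $b_i$ also ends below the path; hence the pair $(b_{j-1},b_j)$ with $j\le i$ minimal and $b_j$ ending below the path is well defined. Breaking the sequence there removes the single offending upward crossing while leaving every remaining crossing of the moved block in the correct target direction. The mirror-symmetric analysis handles green blocks pushed to the bottom.

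Finally I would rule out reversals occurring across different vertices. Red lines, once raised, stay at the top and green lines, once lowered, stay at the bottom, so a recrossing could only arise from a black line moving up at one vertex and down at another; the monotone inheritance guarantees that a black line is carried in a good pair only toward the side dictated by that pair's eventual terminal, so its vertical motion over its lifetime is monotone and it never recrosses a line it has already crossed. Collecting these cases shows that every pair crosses at most once, which is the assertion. I expect the main obstacle to be the third paragraph: pinning down precisely why an inheritance-preventing configuration cannot survive inside a dragged sequence, and confirming that the specific break pair $(b_{j-1},b_j)$ eliminates all and only the forbidden crossings.
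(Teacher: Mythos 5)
Your proposal is correct and follows essentially the same route as the paper's proof: reduce global monotonicity to showing each individual block crossing only reverses pairs that are in the wrong order, handle pure red/green lifts directly, and control the dragged black lines via the inheritance-prevention rule and the choice of break pair $(b_{j-1},b_j)$. The only cosmetic difference is that you argue the black-line case directly from the minimality of $i$, whereas the paper phrases it as a contradiction (a forbidden crossing of a dragged line $l$ with $c$ would force a good pair $(a,b)$ in the sequence with $a$ crossing $c$ but not $b$, so the algorithm would have broken the sequence below $l$).
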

\begin{proof}
  We want too see that each block crossing is monotone, that is, a pair
  of lines that cross in a block crossing is in the wrong order before
  the crossing. Monotonicity of the whole solution then follows.
  We consider moves, where blocks of lines are brought to the top;
  the other case is analogue.

  Suppose a red line $r$ is brought to the top. As all red lines that have
  to leave above $r$ are brought to the top before, $r$ crosses only
  lines that leave below it, that is, lines that have to be crossed by
  $r$. If a black line $l$ is brought to the top, then it is moved together in a
  block that contains a sequence of good pairs from the bottommost red
  line $r'$ to $l$. Suppose $l$ crosses a line $c$ that should not be
  crossed by $l$. Line $c$ cannot be red because all red lines that
  are not in the block that is moved at the moment have been brought
  to the top before. It follows that $r'$ has to cross $c$. Hence, we
  can find a good pair $(a,b)$ in the sequence from $r'$ to $l$ such
  that $a$ has to cross $c$ but $b$ must not cross $c$.
  In this case, the algorithm will break at least one good pair
  between $r'$ and $b$. It follows that $c$ does not cross $l$, a
  contradiction.
\end{proof}

\begin{theorem}
  Let $\bcalg$ be the number of block crossings created by the
  algorithm and let $\bcopt$ be the number of block crossings of an
  optimal solution using only monotone block moves. It holds that
  $\bcalg \le 3 \bcopt$.
\end{theorem}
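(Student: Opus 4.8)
The plan is to recycle the accounting scheme from the proof of Theorem~\ref{thm:3-approx-path} and to check that each of its ingredients survives the two monotone modifications, namely the suppressed inheritance at inheritance-preventing nodes and the new breaking rule in case~(ii). First I would verify that every monotone feasible solution still has to create all good pairs under the new, slightly smaller set obtained after suppressing inheritance: a suppressed pair is one that any solution must separate anyway, so dropping it only weakens our demands, while the pairs that remain are still forced adjacencies, so the optimum is still obliged to realize each of them. Next I would re-check the three counting facts that drive the argument: a block move touches only three adjacencies, hence creates at most three new good pairs; a normal (non-breaking) crossing of the algorithm still nets at least one new good pair --- by Theorem~\ref{theorem:algmonotone} all its moves are monotone, so bringing a red line up or a green line down merely unites lines into good pairs; and a case-(i) break is paid for by the side-effect crossing that had created the broken pair earlier. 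With $\bciialg$ and $\bciiopt$ denoting the case-(ii) breaks of the algorithm and of the optimum, the two inequalities $\gp \ge \bcalg - \bciialg + \gpinit$ and $\gp \le 3\bcopt - \bciiopt + \gpinit$ then hold verbatim, giving $\bcalg \le 3\bcopt + (\bciialg - \bciiopt)$.

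It therefore remains to prove $\bciialg \le \bciiopt$, exactly as before. I would first argue that the edges carrying a case-(ii) sequence are determined by the instance alone: on such an edge the surviving black lines must lie between the block of red and the block of green lines, so at least one initial good pair spanning the gap must be broken there or earlier, and this necessity is independent of the solution and is respected by the monotone constraint. Then, letting $e_1', \dots, e_{\bciialg}'$ be the edges on which the algorithm breaks a \emph{new} case-(ii) pair, I would show by contradiction that the optimum must break a new pair on each of them: if the optimum reused one pair $p$ on two of these edges $e', e''$, and the algorithm broke $p', p''$ there, then since the algorithm preferred $p'$ to $p$ on $e'$, the pair $p'$ is still an unbroken initial good pair lying in the same sequence on $e''$, so the algorithm would have reused $p'$ rather than breaking a new pair on $e''$ --- contradicting the choice of $e''$.

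The hard part is this last step: re-establishing that ``the algorithm's preferred pair $p'$ still exists on $e''$'' now that the preference is governed by the monotone rule (break at the lowest pair $(b_{j-1},b_j)$ whose line $b_j$ ends below the path) instead of the ends-last rule of Theorem~\ref{thm:3-approx-path}. I would have to show that this below-path-minimal criterion selects a pair that is at least as persistent as the one the optimum reuses, i.e.\ that the line $b_j$ whose departure forces the break is precisely the one whose good pair must be sacrificed in every monotone solution, and that the remaining pairs of the sequence --- including $p'$ --- stay intact and available on $e''$. Establishing this persistence, together with checking that suppressing inheritance never discards a good pair that the optimum is nonetheless obliged to realize, is where the monotone-specific work lies; once it is in place the inequality $\bciialg \le \bciiopt$, and hence $\bcalg \le 3\bcopt$, follow as in the non-monotone case.
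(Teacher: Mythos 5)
Your overall strategy matches the paper's, but there are two genuine gaps. The first is an accounting problem: you carry over the non-monotone inequality $\gp \le 3\bcopt - \bciiopt + \gpinit$ and then claim it ``remains to prove $\bciialg \le \bciiopt$, exactly as before.'' In the monotone algorithm, however, a case-(ii) break can occur \emph{twice} on the same edge --- once when black lines are moved to the top with the red lines, and once when they are moved to the bottom with the green lines --- whereas in the non-monotone algorithm this happened only in the single last move on an edge. So $\bciialg \le \bciiopt$ is not available; the exchange argument only yields the two one-sided bounds. The paper handles this by (a) sharpening the upper estimate on $\gp$: in a monotone case-(ii) crossing of the \emph{optimal} solution the two lines of the destroyed pair each lose their partner, so the crossing creates no net good pair at all, giving $\bcalg \le 3\bcopt + (\bciialg - 3\bciiopt)$; and (b) splitting $\bciialg = \bciialgt + \bciialgb$ into top-breaks and bottom-breaks and proving $\bciialgt \le \bciiopt$ and $\bciialgb \le \bciiopt$ separately. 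Without both of these modifications your chain of inequalities does not close.

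The second gap is that the step you yourself identify as ``the hard part'' --- showing that the pair $p'$ preferred by the algorithm on $e'$ persists and would be reused on $e''$ under the new breaking rule --- is only announced, not carried out. The paper's argument here is specific and uses the monotone machinery: writing $p'=(l',l'')$ with $l'$ leaving to the top and $l''$ to the bottom, parallelism of the sequence implies $p'$ survives to $e''$ alongside $p$; then, taking $(a,b)$ to be the topmost good pair obstructed by some line $c$ on $e''$, either $(a,b)$ lies below $p'$ (so the algorithm reuses $p'$ rather than breaking a new pair, contradiction) or $(a,b)$ lies above $p'$, in which case $c$ ends between $a$ and $b$, both of which end to the top, so $c$ is inheritance-preventing at the node where $(a,b)$ would be inherited --- contradicting the existence of the pair $(a,b)$ under the modified inheritance definition. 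This is precisely where the suppressed-inheritance rule earns its keep, and it needs to be written out for the proof to stand.
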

\begin{proof}
  As for non-monotone block crossings, all block crossings increase
  the number of good pairs, with the exception of breaking a sequence
  of initial good pairs in case (ii). Again, also the optimal solution
  has to have crossings, where such sequences are broken. In such
  a crossing, the two lines of the destroyed pair lose their partner.
  Hence, there is only one good pair after the crossing, and the
  number of good pairs does not change at all.
  Let $\bciialgt$ be the number of splits for case~(ii) where the
  block move brings lines to the top, and let $\bciialgb$ be the number of
  such splits where the move brings lines to the bottom.
  We get
  \begin{eqnarray*}
    \bcalg &\le& 3 \cdot \bcopt + (\bciialg - 3 \cdot \bciiopt)\\
    &\le& 3 \cdot \bcopt + (\bciialgt - \bciiopt) + (\bciialgb - \bciiopt)
  \end{eqnarray*}
  To complete the proof, we show $\bciialgt \le
  \bciiopt$, and symmetrically $\bciialgb \le \bciiopt$.

  Let $e_{1}', \dots, e_{\bciialgt}'$ be the sequence of edges, where
  the algorithm uses a new good pair, as a breakpoint for a sequence of
  type~(ii) when lines leave to the top, that is, a good pair that has not
  been destroyed before. Again, we argue that even the optimal
  solution has to use a different breakpoint pair for each of these
  edges. Otherwise, there would be a pair $e', e''$ of
  edges in this sequence, where the optimal solution uses the same
  good pair $p$ on both edges. Let $p'$ and $p''$ be the two good pairs used
  by the algorithm on $e'$ and $e''$, respectively. Let $p' = (l',l'')$. We
  know that
  $l'$ leaves the path to the top and $l''$ leaves to the bottom.
  Because all lines in the sequences on $e'$ and $e''$ stay parallel,
  we know that lines above $l'$ leave to the top, and lines below
  $l''$ leave to the bottom. Especially $p'$ still exists on
  $e''$, as $p$ stays parallel and also still exists.

  As in the description of the algorithm, let $a$ and $b$ be lines
  such that $(a,b)$ is the topmost good pair in the sequence for which
  a line $c$ exists on $e''$ that crosses $a$ but not $b$. If $(a,b)$ is below
  $p'$, then the algorithm would reuse $p'$ instead of the new pair
  $p''$, since $(a,b)$ is in a sequence below $p$; hence, also
  $p'$ is in the sequence and above $(a,b)$.

  Now suppose that $(a,b)$ is above $p'$. The pair $(a,b)$ is created
  by inheritance because $c$ ends between $a$ and $b$. As both
  $a$ and $b$ end to the top, separated from the bottom side of the
  path by $p'$, this inheritance takes place at a node, where
  $a$ is the last line to end on the top side. But in this case
  $c$ prevents the inheritance of the good pair $(a,b)$ because it
  crosses only $a$, a contradiction.
\end{proof}

\section{Block Crossings on Trees}
\label{sec:tree}

In what follows we focus on instances of (M)BCM that are
trees. We first give an algorithm that bounds the number of block
crossings. Then, we consider trees with an additional constraint on the
lines; for these we develop a 6-approximation for MBCM.


\begin{theorem}
\label{thm:tree}
For any tree $T$ and lines $L$ on $T$, we can order
the lines with at most $2|L|-3$ \emph{monotone} block crossings in
$O(|L|  (|L|+|E|))$ time.
\end{theorem}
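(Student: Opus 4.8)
The plan is to build the solution incrementally, inserting the lines one at a time in an arbitrary order $l_1,\dots,l_{|L|}$ and charging a bounded number of block crossings to each insertion. Write $L_i=\{l_1,\dots,l_i\}$ and suppose inductively that we already have a consistent ordering of $L_{i-1}$ in which every pair of lines crosses at most once, so that the partial solution is monotone. To add $l_i$ we move only $l_i$ itself, letting it cross exactly the lines of $L_{i-1}$ that it must cross, each exactly once. Every block move we perform therefore has $l_i$ as one of its two blocks, i.e.\ it is a single-element (unit-block) move of the type analysed in Section~\ref{sec:edge}. Since a pair of lines crosses at most once throughout, the resulting ordering is automatically monotone and consistent, so feasibility of (M)BCM is preserved. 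The whole theorem then reduces to the following insertion lemma: a single line can be added to a consistent monotone arrangement of the others using at most two monotone block crossings. Granting this, the first line crosses nothing (cost $0$), the second crosses only $l_1$ and hence costs at most $1$, and every later insertion costs at most $2$, for a total of at most $0+1+2(|L|-2)=2|L|-3$.

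The heart of the argument, and the step I expect to be hardest, is the insertion lemma. Two structural facts drive it. First, by the \textbf{path terminal property} the endpoints of $l_i$ are degree-one nodes at which no other line terminates; consequently the first and the last edge of the tree path of $l_i$ carry $l_i$ alone, so $l_i$ enters and leaves the arrangement as an outermost line and all of its crossings occur strictly in the interior of its path. Second, by the \textbf{path intersection property} every other line $l'$ that must cross $l_i$ shares with it a single contiguous subpath, and since the two cross exactly once we are free to place that crossing on \emph{any} edge of this shared subpath. This freedom is the tool that lets us consolidate many crossings into few block moves: rather than paying one move per edge, we may schedule the crossings to happen together.

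Concretely, I would route $l_i$ so that it stays to one side of its corridor along the first part of its path and to the other side along the second part, turning once. The lines that must end up below $l_i$ are gathered and crossed in a single downward block move, and the lines that must end up above it in a single upward block move; the side on which each crossing line enters relative to $l_i$ is dictated by the fixed embedding (equivalently, by whether the two terminal pairs interleave in the cyclic leaf order), so the target side of every crossing line is determined. The obstacle is to prove that these crossings can always be realised as exactly two \emph{consecutive}-block moves with $l_i$ flush against the block: one must choose, for each crossing line, an edge of its shared subpath so that at the chosen edge the lines to be crossed form one contiguous block adjacent to $l_i$, and so that performing the two moves disturbs none of the already-realised crossings, preserving monotonicity and consistency at every vertex. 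I would establish this by a greedy sweep along the path of $l_i$, invoking the unit-block results of Section~\ref{sec:edge} edge by edge and using the freedom above to push each crossing into the common turning region, with the two private end edges supplying the extreme positions that seed the upward and downward sweeps. Finally, each insertion traverses the path of $l_i$ and inspects its $O(|L|)$ crossing lines, giving the claimed $O(|L|(|L|+|E|))$ running time, and the base cases above yield the additive $-3$.
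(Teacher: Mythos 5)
Your high-level accounting (two block crossings per inserted line, with the first two lines contributing $0$ and $1$) matches the intended bound, but the insertion lemma to which you reduce everything is false as stated, and this is a genuine gap. It is not true that a single line can always be added to a consistent monotone arrangement of the remaining lines using at most two monotone block crossings when the insertion order is arbitrary. Consider a path $v_1,\dots,v_6$ with a line $l$ running from a terminal at $v_1$ to a terminal at $v_6$, and three short lines $a$, $b$, $c$ whose intersections with $l$ are exactly the single edges $(v_1,v_2)$, $(v_3,v_4)$, $(v_5,v_6)$, respectively, with terminals placed so that each of $a,b,c$ enters its shared edge on one side of $l$ and leaves the path on the other side, forcing an unavoidable crossing with $l$ on that one edge. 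The lines $a,b,c$ are pairwise disjoint, so the arrangement of $\{a,b,c\}$ has no crossings; if $l$ is inserted last, it must cross $a$, $b$ and $c$ on three pairwise disjoint edges, and since a block move takes place on a single edge this forces at least three block crossings in that single insertion step (with $k$ such short lines, $k$ crossings). The freedom you invoke --- that each pairwise crossing may be placed anywhere on the shared subpath --- cannot consolidate crossings whose shared subpaths are disjoint, so the picture of ``one downward and one upward block move'' per inserted line breaks down, and the telescoping sum $0+1+2(|L|-2)$ has no basis.

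What is missing is a carefully chosen insertion order together with a different charging scheme. The paper's proof processes the tree by a traversal starting from an edge incident to a terminal and inserts each line at the first vertex $v$ where its path meets the already-processed subtree, spending one block move on each of the at most two edges of that line incident to $v$ to carry it from the innermost position at $v$ to its correct position at the far ends of those edges. The crossing between two lines $l$ and $m$ is then charged to whichever of them is first encountered at the endpoint $x$ of their common subpath nearest the traversal root --- one of the two always is, because the tree path from the root to $x$ enters $x$ through a single edge, which would otherwise be a shared edge of $l$ and $m$ beyond $x$ --- so that crossing is realized inside that line's single block move on the first edge of the common subpath. Crossings of $l$ with lines met further along its path are paid for by those lines, not by $l$. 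Without fixing the order in this way, or proving an analogous structural statement for whatever order you use, the per-insertion bound of two does not hold. The unit-block results of Sec.~\ref{sec:edge} that you cite concern a single edge and do not by themselves control how many distinct edges of the tree must host a block move during one insertion.
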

\emph{Proof.}
We give an algorithm in which paths are inserted one by one into the current
order; for each newly inserted path we create at most 2 monotone block crossings.
The first line cannot create a crossing, and the second line
crosses the first one at most once.

\begin{wrapfigure}{r}{0.29\textwidth}
  \centering
  \includegraphics[]{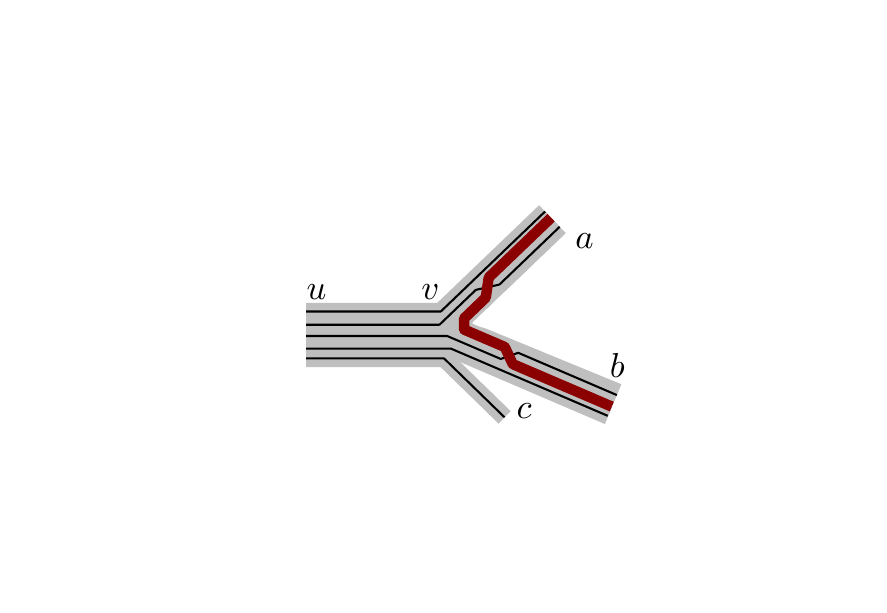}
  \caption{Insertion of a new line (red, fat) into the current order on edges $(v, a)$
  and $(v, b)$.}
  \label{fig:tree1}
\end{wrapfigure}
We start at an edge $(u, v)$ incident to a terminal.
When processing the edge the paths $L_{uv}$ are already in the correct order;
they do not need to cross on yet unprocessed edges of $T$. We consider all
unprocessed edges $(v, a),(v, b),\ldots$ incident to $v$ and build the
correct order for them.
The relative order of lines also passing through $(u, v)$ is kept unchanged.
For all lines passing through $v$ that were not treated before, we
apply an insertion procedure; see Fig.~\ref{fig:tree1}. Consider, e.g.,
the insertion of a line passing through $(v, a)$ and $(v, b)$. Close to $v$ we
add $l$ on both edges at the innermost position such that we do not
get vertex crossings with lines that pass through $(v, a)$ or $(v, b)$.
We find its correct position in the current order of lines $L_{va}$ close to
$a$,
and insert it using one block crossing.
This crossing will be the last one on $(v, a)$ going from $v$ to
$a$. Similarly, $l$ is inserted into $L_{vb}$.
We have to make sure that
lines that do not have to cross are inserted in the right order. As we
know the right relative order for a pair of such lines we can make
sure that the one that has to be innermost at node $v$ is inserted
first. Similarly, by looking at the clockwise order of edges around $v$,
we know the right order of line insertions such that there are no avoidable vertex crossings.
When all new paths are inserted the orders on $(v, a),(v, b),\dots$
are correct; we proceed by recursively processing these edges.

When inserting a line, we create at most $2$ block crossings, one
per edge of $l$ incident to $v$. After inserting the first two lines
into the drawing there is at most one crossing. Hence, we get at
most $2|L|-3$ block crossings in total. Suppose monotonicity would be
violated, that is, there is a pair of lines that crosses twice. The
crossings then have been introduced when inserting the second of those
lines on two edges incident to a node $v$. This can, however, not
happen, as at node $v$ the two edges are inserted in the right order.
Hence, the block crossings of the solution are monotone.\qed

In the following we show that the upper bound that our algorithm yields is tight.

\paragraph{Worst-Case Examples.}
\label{sec:worst-case}
Consider the graph show in Fig.~\ref{fig:worst-case-tree}.
The new green path in Fig.~\ref{fig:worst-case-tree}b) is inserted so
that it crosses 2 existing paths. This is the induction step for
creating instances in which $2|L|-3$ block crossings are necessary in
any solution.

\begin{figure}[bt]
    \centering
    \includegraphics{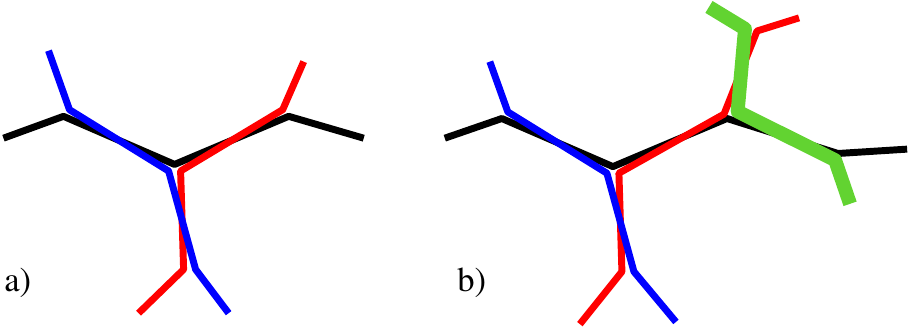}
    \caption{Tree with $2|L|-3$ necessary crossings for a) $|L|=3$, b)
    $|L|=4$.}
    \label{fig:worst-case-tree}
\end{figure}

We also have a simple example in which the tree algorithm creates
$|L|-1$ crossings while a single block crossing suffices; see
Fig.~\ref{fig:alg-tree-counterexample} for $|L|=5$. The example can easily be
extended to any number of lines. This shows that the algorithm does
not yield a constant factor approximation.
\begin{figure}[bt]
		\centering
    a)\includegraphics[page=1,width=0.45\textwidth]{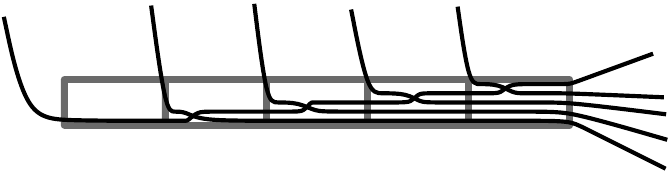}
    \hfill
    b)\includegraphics[page=2,width=0.45\textwidth]{pics/alg-tree-counterexample}
    \caption{a) The algorithm started leftmost produces
    $4$ crossings; b) one block crossing suffices.}
		\label{fig:alg-tree-counterexample}
\end{figure}

The examples shows that the algorithm described in Theorem~\ref{thm:tree} does
not guarantee an approximation of the optimal solution.
Next, we introduce an additional constraint on the lines, which helps us
to approximate the minimum number of block crossings.

\subsubsection{Upward Trees.}
\label{sec:utree}
We consider MBCM on an \emph{upward} tree~$T$, that is, a tree that has a
planar upward drawing in which all paths are monotone in vertical
direction, and all path sources are on the
same height as well as all path sinks; see Fig.~\ref{fig:utree}.
Bekos~et~al.~\cite{bekos08} already considered such trees (under the name
``left-to-right trees'') for the metro line crossing minimization problem.
Note that a graph whose skeleton is a path is not necessarily an upward tree.
Our algorithm consists of three steps. First, we perform a
simplification step removing some lines. Second, we use the algorithm for
trees given in  Sec.~\ref{sec:tree} on a simplified instance. Finally, we reinsert
the removed lines into the constructed order. We first analyze the upward
embedding.

\begin{figure}[t]
		\centering
    \includegraphics[width=0.9\textwidth]{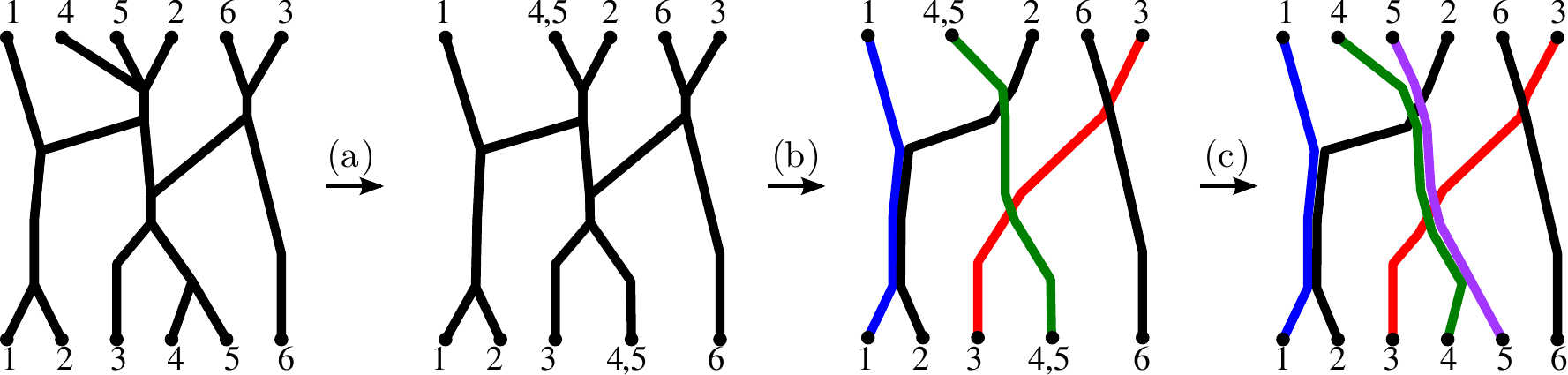}
    \caption{Algorithm for upward trees: (a) simplification, (b) line
    ordering, (c) reinsertion.}
    \label{fig:utree}
\end{figure}

Given an upward drawing of $T$, we read a permutation $\pi$
produced by the terminals on the top; we assume that the terminals
produce the identity permutation on the bottom.
Similar to the single edge case the goal is to sort $\pi$
by a shortest sequence of block moves. Edges of $T$ restrict some block
moves on $\pi$; e.g., the blocks $1\:4$ and $5$ in Fig.~\ref{fig:utree} cannot be
exchanged as there is no suitable edge. However, we can use the
lower bound for block crossings on a single edge, see
Sec.~\ref{sec:edge}: For sorting a simple permutation $\pi$, at least
$\bp(\pi)/3$ block moves are necessary.
We stress that simplicity of $\pi$ is crucial here. To get an approximation,
we show how to simplify a tree.

Consider two non-intersecting paths $a$ and $b$ that are adjacent in both permutations and
share a common edge. 
We prove that one of these paths can
be removed without changing the optimal number of block crossings. First,
if any other line $c$ crosses $a$ then it also crosses $b$ (i). This is implied
by planarity and $y$-monotonicity of the drawing. Second, if $c$ crosses
both $a$ and $b$ then all three paths share a common edge (ii); otherwise,
there would be a cycle due to planarity.
Hence, for any solution for the paths $L-\left\{ b \right\}$, we
can construct a solution for $L$ by inserting $b$ without any new
block crossing.
To insert $b$, we must first move all block crossings on $a$ to
the common subpath with $b$. This is possible due to
observation~(ii). Finally, we can place $b$ parallel to $a$.

To get a 6-approximation for an upward tree $T$, we first remove lines
until the tree is simple.
Then we apply the insertion algorithm presented in Sec.~\ref{sec:tree}, and finally reinsert
the lines removed in the first step. The number of block crossings is
at most $2|L'|$, where $L'$ is the set of lines of the simplified instance.
As an optimal solution has at least $|L'|/3$ block
crossings for this simple instance, and reinserting lines does not
create new block crossings, we get the following theorem.

\begin{theorem}
  The algorithm yields a 6-approximation for MBCM on upward
  trees.
  \label{thm:upward_tree}
\end{theorem}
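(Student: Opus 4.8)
The plan is to combine a cheap upper bound on what the algorithm pays with a matching lower bound on the optimum, both expressed through the number of lines $|L'|$ of the simplified instance. The three ingredients I would assemble are: (a) the algorithm's cost is at most $2|L'|$; (b) the optimum is unchanged by the simplification step, so it suffices to bound below the optimum of the simple instance; and (c) the simple instance forces at least $|L'|/3$ block crossings. Dividing then yields the factor $2|L'| / (|L'|/3) = 6$.

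For the upper bound I would run the tree algorithm of Theorem~\ref{thm:tree} on the simplified instance $L'$, which produces at most $2|L'| - 3 \le 2|L'|$ monotone block crossings. I then invoke the already-established fact that reinserting each removed line costs no additional block crossing (the removed line is placed parallel to its surviving partner after all crossings of that partner have been slid onto the shared subpath). Hence $\bcalg \le 2|L'|$ on the full instance.

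For the lower bound I would first use that removing a line of an adjacent, non-crossing, edge-sharing pair does not change the optimal number of block crossings, so $\bcopt$ on $L$ equals the optimum on the simple instance $L'$. On the simple instance I read the top permutation $\pi'$ (with the bottom fixed as the identity) and apply the single-edge lower bound $\bp(\pi')/3$ from Section~\ref{sec:edge}; this transfers because the tree only restricts, never enlarges, the set of admissible block moves, so any tree solution is in particular a sequence of block moves on $\pi'$ and must obey the single-edge bound. Since $\pi'$ is simple it has no good pairs, so every one of the $|L'|+1$ adjacencies (including the two boundary adjacencies with $0$ and $|L'|+1$) is a breakpoint, giving $\bp(\pi') \ge |L'|$ and thus $\bcopt \ge |L'|/3$.

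The main obstacle I anticipate is not the final arithmetic but guaranteeing that the simplification really terminates in a \emph{simple} permutation and that it is \emph{loss-free} for the optimum. Concretely, I must be sure that every removable pair is correctly characterized (non-intersecting, adjacent in both the top and bottom orders, sharing an edge) and that after exhaustive removal no good pair survives, so that the breakpoint count is genuinely $\ge |L'|$; and I must verify the geometric claim---relying on planarity and $y$-monotonicity of the upward drawing---that a surviving line can absorb its removed partner by sliding the partner's block crossings onto their common subpath without introducing a new crossing. Once these structural facts are pinned down, combining the bounds gives $\bcalg \le 2|L'| \le 6 \cdot (|L'|/3) \le 6\,\bcopt$.
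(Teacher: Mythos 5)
Your proposal follows the paper's argument exactly: remove one line of each non-crossing, edge-sharing pair that is adjacent in both permutations (loss-free for the optimum, which the paper justifies via planarity and $y$-monotonicity), run the tree-insertion algorithm for at most $2|L'|$ monotone block crossings, reinsert the removed lines for free, and lower-bound the optimum by $\bp(\pi)/3 \ge |L'|/3$ on the resulting simple permutation, giving the factor $6$. The structural facts you flag as still needing verification are precisely what the paper supplies with its two observations (every line crossing $a$ also crosses $b$, and any such line shares a common edge with both, so the block crossings on $a$ can be slid onto the shared subpath before placing $b$ parallel to $a$), so your route is the same as the paper's at essentially the same level of detail.
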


\section{Block Crossings on General Graphs}
\label{sec:graph}
Finally, we consider general graphs. We suggest an algorithm that
achieves an upper bound on the number of block crossings and show that
it is asymptotically worst-case optimal. Our
algorithm uses monotone block moves, that is, each pair of lines
crosses at most once.
The algorithm works on any embedded graph; it does not even need to be
planar, we just need to know the circular order of incident edges
around each vertex.

\begin{algorithm}[b]
  \caption{Ordering the lines on a graph}
  \label{algo:gen-alg-edge-ordering}
  \ForEach{edge $e$ with $|L_{e}| > 1$}{
  Build order of lines on both sides of $e$\;
  Merge lines that are in the same group on both sides\;
  Find the largest group of consecutive lines that stay parallel on
  $e$\;
  Insert all other lines into this group and undo merging\;}
\end{algorithm}

\begin{figure}[tb]
  \centering
    \includegraphics[scale=1,page=2]{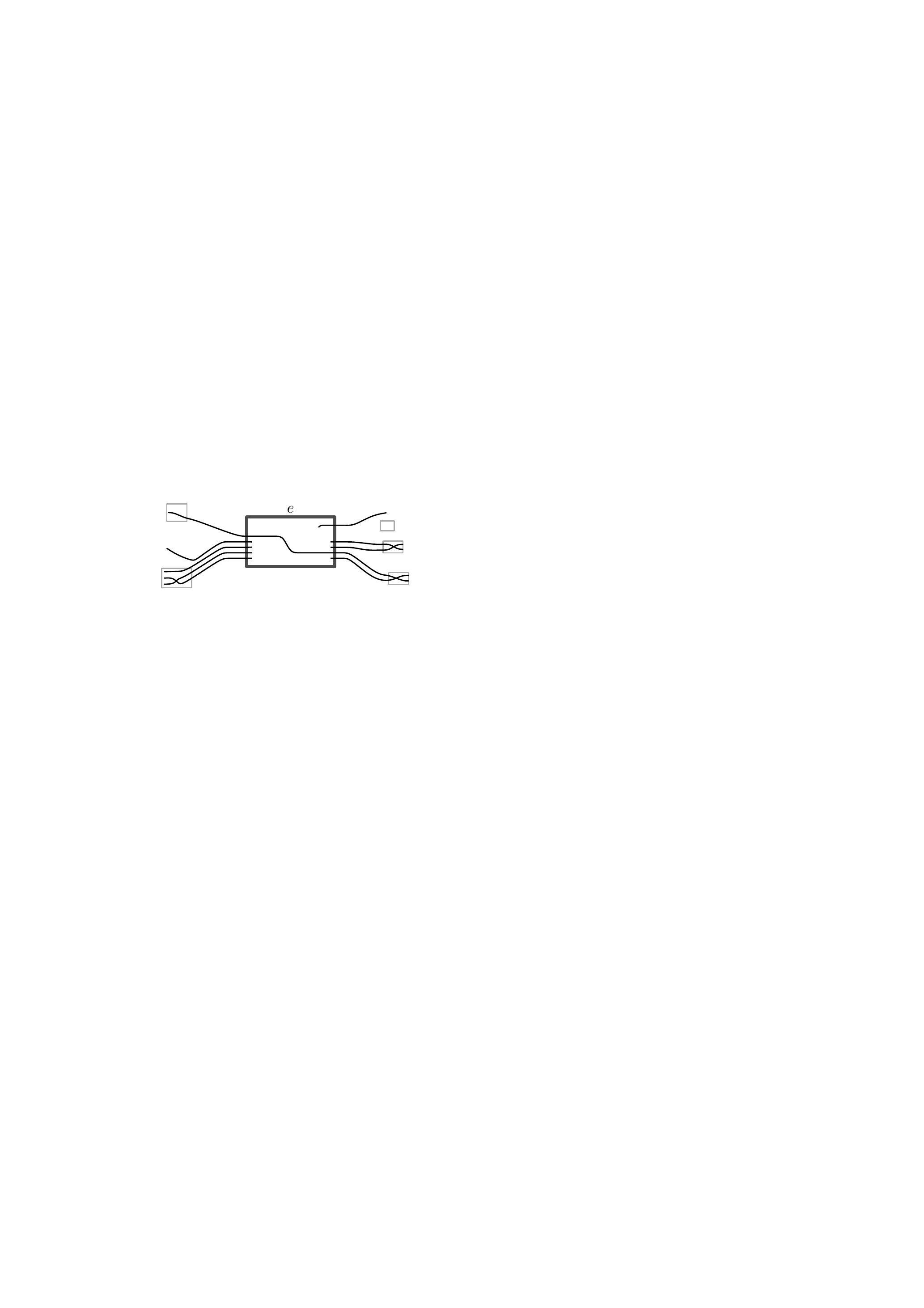}
		\hfill
    \includegraphics[scale=1,page=3]{pics/alg-gen-graph-edge}
  \caption{Sorting the lines on edge $e$. (a) Cutting edges (marked)
    define groups. The lines marked in gray are merged as they are in
    the same group on both sides. (b) Sorting by insertion into the largest
    group (red, fat); the merged lines always stay together, especially in their
block crossing.}
  \label{fig:gen-graph-edge}
\end{figure}

The idea of our algorithm is simple. We go through the edges in some
arbitrary order, similar to previous work on standard metro-line crossing
minimization~\cite{argyriou09,nollenburg09}. When we treat an edge, we completely
sort the lines that traverse it.  A crossing between a pair of lines can be
created on the edge only if this edge is the first one treated by the
algorithm that is used by both lines of the pair; see
Algorithm~\ref{algo:gen-alg-edge-ordering}.
The crucial part is sorting the lines on an edge.
Suppose we currently deal with edge $e$ and want to sort $L_e$.
Due to the path intersection property, the edge set used by the
lines in $L_e$ forms a tree on each side of $e$; see
Fig.~\ref{fig:gen-graph-edge}.
We cut these trees at those edges that have already been processed by
our algorithm.
Now, each line on $e$ starts at a leaf on one side and ends at a leaf on the
other side. Note that multiple lines can start or end at the same leaf.

From the tree structure and the orderings on the edges processed previously, we
get two orders of the lines, one on each side of $e$. We consider
\emph{groups of lines}
that start or end at a common leaf of the tree (like the red lines in
Fig.~\ref{fig:gen-graph-edge}). All lines of a
group have been seen on a common edge, and, hence, have been sorted.
Therefore lines of the same group form a consecutive subsequence on
one side of $e$, and have the same relative order on the other side of
$e$.

Let $g$ and $g'$ be a group of lines on the left and on the right side
of $e$, respectively. Suppose the set $L'$ of lines starting in
$g$ and ending in $g'$ consists of multiple lines. As the lines of
$g$ as well as the lines of $g'$ stay parallel on $e$, $L'$ must form
a consecutive subsequence (in the same order) on both sides. Now we
\emph{merge} $L'$ into one representative for the sequence of
lines, that is, we remove all lines of $L'$ and replace them by a
single line that is in the position of the lines of $L'$ on the
sequences on both sides of $e$. Once we find a
solution, we replace the
representative by the sequence without changing the number of block crossings.
Consider a crossing that involves the representative of~$L'$, that is, it is
part of one of the moved blocks. After replacing it, the sequence~$L'$
of parallel lines is completely contained in the same block. Hence, we do
not need additional block crossings.

We apply this merging for all pairs of groups on the left and right
end of $E$.  Then, we identify a group with the largest number of
lines after merging, and insert all remaining lines into it one by
one. Clearly, each insertion requires at most one block crossing; in
Fig.~\ref{fig:gen-graph-edge} we need three block crossings to insert
the lines into the largest (red) group.  After computing the
crossings, we undo the merging step and get a solution for edge $e$.

\begin{theorem}
  Algorithm~\ref{algo:gen-alg-edge-ordering} sorts all lines in
  $O(|E|^2|L|)$ time
  by monotone block moves. The resulting number of block crossings
  is $O(|L| \sqrt{|E'|})$, where
  $E'$ is the set of edges with at least two lines on them.
  \label{thm:alg-gen-graph}
\end{theorem}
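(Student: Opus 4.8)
The plan is to verify the four assertions of Theorem~\ref{thm:alg-gen-graph} in turn: feasibility of the produced orders, monotonicity of every block move, the $O(|E|^2|L|)$ running time, and — the real content — the $O(|L|\sqrt{|E'|})$ bound on the number of block crossings. For feasibility I would argue that when the algorithm reaches an edge $e$, the path intersection property makes the edges used by $L_e$ on each side of $e$ a tree; cutting these trees at the already-processed edges turns every line of $L_e$ into a line from a left leaf to a right leaf, and the lines meeting at a common leaf form a group whose internal order was fixed when that leaf's edge was processed. Hence the two boundary orders on $e$ are exactly concatenations of these groups, sorting $L_e$ to agree with them is always realizable by block moves, and performing the insertions in the correct order around each endpoint (as in the tree algorithm of Theorem~\ref{thm:tree}) avoids all avoidable vertex crossings, so the orders are consistent in every node.

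For monotonicity I would use the observation accompanying the algorithm: a crossing between two lines $a$ and $b$ can only be created on the first processed edge common to both, since before that edge they have never shared a processed edge and their relative order is unconstrained. On that first common edge the algorithm sorts $L_e$ into the unique target order, so $a$ and $b$ reach their final relative order there; on every later common edge they are already correctly ordered, and the merging step keeps each such parallel pair inside a single moved block, so they never cross again. Thus no pair crosses twice. The time bound is routine: on each edge, reading the two boundary orders off the cut trees, performing the merging, and inserting the groups costs $O(|E||L|)$, and there are $O(|E|)$ edges.

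The crux is the crossing bound. I would first prove a per-edge estimate. After merging, the lines of $L_e$ split into maximal parallel groups determined by the cut trees; the algorithm keeps the largest group fixed and inserts each remaining merged group with a single block crossing, so the number of block crossings on $e$ is at most the number of inserted groups, which is at most $|L_e|-1$ and at most one less than the total number of groups on $e$. Each group boundary is witnessed either by a terminal or by a previously processed (``cut'') edge, so the number of groups on $e$ is controlled by the number of processed edges adjacent to $e$ in the line trees; combining gives $b_e \le \min(|L_e|-1,\ \text{groups}(e)-1)$ for the block crossings $b_e$ on $e$.

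The main obstacle is turning these per-edge bounds into the global $O(|L|\sqrt{|E'|})$ estimate, since neither bound alone suffices: summing $|L_e|-1$ over $E'$ only yields $O(|L||E'|)$, and the naive group count is no better. Here I would balance the two bounds with a threshold argument around $\sqrt{|E'|}$: an edge whose largest group already contains nearly all of $L_e$ contributes few crossings directly, whereas an edge with many groups must be fed by many distinct cut edges, and a double counting of (edge, cut-edge) incidences — exploiting that each merged group occupies a contiguous subpath and that, by monotonicity, every pairwise crossing is charged at most once — limits how many edges can be ``heavy.'' Optimizing the threshold at $\sqrt{|E'|}$ then produces the claimed bound. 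I expect this counting step, rather than feasibility or monotonicity, to carry essentially all the difficulty, consistent with the paper's own remark that proving the upper bound is the non-trivial part.
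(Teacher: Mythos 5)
Your treatment of feasibility, monotonicity, and the running time matches the paper's and is fine. The gap is in the crossing bound, which you correctly identify as the crux but do not actually prove. The paper's argument rests on two concrete ingredients that are absent from your sketch. First, a global \emph{budget}: for each edge $e$ define the information gain $I(e)$ as the number of pairs of lines that appear together on a common edge for the first time at $e$; since every pair is counted at most once over the whole run, $\sum_e I(e) \le |L|^2$. Second, a per-edge \emph{quadratic} inequality $b_e^2 \le I(e)$, proved via the group structure: with left groups of sizes $l_i$, right groups of sizes $r_j$, and $s_{ij}\in\{0,1\}$ merged lines in cell $(i,j)$, one has $b_e = s - l_1$ and $I(e) = s^2 - \sum_i l_i^2 - \sum_j r_j^2 + \sum_{ij} s_{ij}^2$, and an explicit algebraic computation (the paper's Lemma~\ref{lemma:matrix-lemma-app}) shows $b_e^2 \le I(e)$. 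Cauchy--Schwarz (or, equivalently, the threshold argument you gesture at) then gives $\sum_{e\in E'} b_e \le |L|\sqrt{|E'|}$.

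Your proposed substitute --- bounding $b_e$ by the number of groups and double-counting (edge, cut-edge) incidences --- does not work as stated. The number of groups on $e$ controls neither $b_e$ from above in a useful way (with $n$ left and $m$ right groups one only gets $b_e = s - l_1 \le s(1 - 1/\min(n,m))$, which can be close to $|L_e|$ even for few groups) nor does the total number of (edge, cut-edge) incidences admit a bound better than quadratic in $|E|$, since a processed edge can serve as a cut edge for many later edges. The remark that ``every pairwise crossing is charged at most once'' points in the right direction (charging to pairs of lines rather than to cut edges), but without the quantitative statement that $b_e$ block crossings force at least $b_e^2$ \emph{newly cohabiting} pairs on $e$, the threshold at $\sqrt{|E'|}$ has nothing to balance against, and the argument does not close.
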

\begin{proof}
  First, it is easy to see that no avoidable crossings are created, due to the
	path intersection property.
  Additionally, we care about all edges with at least two
  lines, which  ensures that all unavoidable crossings will be placed. Hence, we
  get a feasible solution using monotone crossings.
  Our algorithm sorts the lines on an edge in $O(|L||E|)$ time. We can
  build the tree structure and find the orders and groups by following
  all lines until we find a terminal or an edge that was processed
  before in $O(|L||E|)$ time. Merging lines and finding the largest
  group need $O(|L|)$ time; sorting by insertion into this group and
  undoing the merging can be done in $O(|L|^2)$ time. Note that
  $|L| \le |E|$ due to the path terminal property.

For analyzing the total number of block crossings, we maintain an
\emph{information} table~$T$ with $|L|^2$ entries. Initially, all the
entries are empty. After
processing an edge $e$ in our algorithm, we fill entries
$T[l,l']$ of the table for each pair $(l,l')$ of lines
that we see together for the first time. The main idea is that with
$b_{e}$ block crossings on edge $e$ we fill at least $b_{e}^2$ new
entries of $T$. The upper bound then can be concluded.

More precisely, let the \emph{information gain} $I(e)$
be the number of pairs of (not necessarily distinct) lines $l, l'$ that
we see together on a common edge $e$ for the first time. Clearly,
$\sum_{e \in E}I(e) \le |L|^{2}$. Suppose that $b_{e}^{2} \le I(e)$
for each edge~$e$. Then, $\sum_{e \in E} b_{e}^2 \le \sum_{e \in E}
I(e) \le |L|^{2}$. Using the Cauchy-Schwarz inequality $|\langle
x,y\rangle| \le \sqrt{\langle x,x\rangle\cdot \langle y,y\rangle}$ with $x$
as the vector of the $b_e$ and $y$ as a
vector of $1$-entries, we see that the total number of block crossings
is $\sum_{e \in E'} b_{e} \le |L| \sqrt{|E'|}$.

We still have to show that $b_e^2 \le I(e)$ for an edge $e$. For doing
so, we analyze the lines after the merging step.
Consider the groups on both sides of~$e$; we number the groups on
the left side $\mathfrak{L}_{1}, \dots, \mathfrak{L}_{n}$ and the groups on
the right side
$\mathfrak{R}_{1}, \ldots, \mathfrak{R}_{m}$ with $l_i =
|\mathfrak{L}_i|, r_j = |\mathfrak{R}_j|$ for $1\le i \le
n$, $1 \le j \le m$.  Without loss of generality, we can assume that
$\mathfrak{L}_1$ is the largest group into which all remaining lines are inserted.
Then, $b_e \le |L_e| - l_1$.
Let $s_{ij}$ be the number of lines that are in group $\mathfrak{L}_i$ on the left
side and in group $\mathfrak{R}_j$ on the right side of $e$. Note that $s_{ij} \in
\left\{ 0,1 \right\}$, otherwise we could still merge lines. Then $l_i =
\sum_{j} s_{ij}$, $r_j = \sum_{i}s_{ij}$, $s := |L_e| = \sum_{ij} s_{ij}$, and $b_e = s - l_1$. The information
gain is $I(e) = s^2 - \sum_{i} l_i^2 - \sum_j r_j^2 + \sum_{ij} s_{ij}^2$.
By applying Lemma~\ref{lemma:matrix-lemma-app}
we get $b_{e}^2 \le I(e)$. To complete the proof, note
that the unmerging step cannot decrease~$I(e)$.
\end{proof}

\begin{lemma}
\label{lemma:matrix-lemma-app}
  For $1 \le i \le n$ and $1 \le j \le m$ let
  $s_{ij} \in \left\{ 0,1 \right\}$.
  Let $l_{i} = \sum_{j}s_{ij}$ for $1 \le i \le n$
  and let $r_{j} = \sum_{i} s_{ij}$ for $1 \le j \le m$ such that
  $l_{1} \ge l_i$ for $1 \le i \le n$ and $l_{1} \ge r_j$ for $1 \le j
  \le m$. Let $s = \sum_{i=1}^n \sum_{j=1}^m s_{ij}$, $b = s - l_1$,
  and $I = s^2 - \sum_{i} l_i^2 - \sum_j r_j^2 + \sum_{ij} s_{ij}^2$.
  Then $b^2 \le I$.
\end{lemma}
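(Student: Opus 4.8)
The plan is to reduce the claimed inequality to an elementary estimate on the row and column sums, and then to prove that estimate with one slightly non-obvious refinement. The first move is to exploit that the entries are binary: since $s_{ij}\in\{0,1\}$ we have $s_{ij}^2=s_{ij}$, hence $\sum_{ij}s_{ij}^2=\sum_{ij}s_{ij}=s$, and the quantity in the lemma simplifies to $I=s^2+s-\sum_i l_i^2-\sum_j r_j^2$. Substituting nothing yet and expanding $b^2=(s-l_1)^2=s^2-2sl_1+l_1^2$, the goal $b^2\le I$ becomes, after cancelling $s^2$ and rearranging, the purely algebraic inequality
\[
  \sum_i l_i^2 + \sum_j r_j^2 \;\le\; s + 2sl_1 - l_1^2 .
\]
So it suffices to bound the two sums of squares separately.

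The row sum is easy: since $l_1\ge l_i$ for every $i$, I would write $\sum_i l_i^2=\sum_i l_i\cdot l_i\le l_1\sum_i l_i=l_1 s$. The column sum is where care is needed, and this is the main obstacle. The symmetric estimate $\sum_j r_j^2\le l_1\sum_j r_j=l_1 s$, combined with the row bound, only yields $\sum_i l_i^2+\sum_j r_j^2\le 2l_1 s$; comparing with the target, this would require $l_1^2\le s$, which is simply false in general (e.g. a single full row). I therefore need a tighter column bound that treats the first row asymmetrically, reflecting that $l_1$ is the size of the distinguished largest group.

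To get it, I would split each column sum as $r_j=s_{1j}+r_j'$, where $r_j':=\sum_{i\ge 2}s_{ij}=r_j-s_{1j}\ge 0$, and note $\sum_j r_j'=s-l_1=b$. Using $s_{1j}^2=s_{1j}$, the column sum expands as $\sum_j r_j^2=l_1+\sum_j r_j'(2s_{1j}+r_j')$, and the key simplification is the identity $2s_{1j}+r_j'=s_{1j}+r_j\le 1+l_1$, where I use $s_{1j}\le 1$ together with the hypothesis $r_j\le l_1$. Hence $\sum_j r_j^2\le l_1+(l_1+1)\sum_j r_j'=l_1+(l_1+1)b$. Adding the row bound $l_1 s$ and substituting $b=s-l_1$, the right-hand side becomes $l_1 s+l_1+(l_1+1)(s-l_1)=s+2sl_1-l_1^2$, which matches the target exactly, so $b^2\le I$ follows. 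I expect the only subtlety in the write-up to be the bookkeeping in the column split and the explicit observation that the naive symmetric bound is insufficient, which is precisely what forces singling out row $1$.
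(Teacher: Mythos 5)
Your proof is correct; I checked every step, including the reduction to $\sum_i l_i^2+\sum_j r_j^2\le s+2sl_1-l_1^2$, the identity $\sum_j r_j'=s-l_1=b$, the bound $2s_{1j}+r_j'=s_{1j}+r_j\le 1+l_1$, and the final arithmetic $l_1s+l_1+(l_1+1)(s-l_1)=s+2sl_1-l_1^2$. Your route differs from the paper's in how the slack $I-b^2$ is organized. The paper expands $I-b^2$ directly and regroups it entrywise into
\[
\sum_{i=2}^n\sum_{j=1}^m s_{ij}\bigl(s_{ij}+2l_1-l_i-r_j\bigr)-\sum_{j=1}^m s_{1j}\sum_{i=2}^n s_{ij},
\]
then finishes with the pointwise facts $2l_1\ge l_i+r_j$ and $s_{ij}(s_{ij}-s_{1j})\ge 0$ for binary entries. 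You instead bound the two sums of squares separately: the trivial estimate $\sum_i l_i^2\le l_1 s$ for rows, and for columns the asymmetric split $r_j=s_{1j}+r_j'$ that peels off row $1$. The ingredients are the same (binary entries, $l_1$ dominating all row and column sums), but your decomposition has the expository advantage of making explicit why the naive symmetric bound $\sum_j r_j^2\le l_1 s$ is insufficient and exactly where the distinguished role of the largest group enters; the paper's version is a more compact single telescoping computation. Either proof is complete and neither is more general than the other.
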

\begin{proof}
  It is easy to check that for any $i,j$ it holds that
  $s_{ij}(s_{ij}-s_{1j}) \ge 0$ as $s_{ij} \in \left\{
  0,1 \right\}$. Using this property, we see that
  \begin{eqnarray*}
    I - b^2 &=& s^2 - \sum_{i=1}^n l_i^2 -
    \sum_{j=1}^m r_j^2 + \sum_{i=1}^n \sum_{j=1}^m s_{ij}^2 - s^2 + 2 s
    l_{1} - l_{1}^2\\
    &=& \sum_{i=1}^n \sum_{j=1}^m s_{ij}^2 + 2l_{1}(s - l_1) -
    \sum_{i=2}^n l_i^2  - \sum_{j=1}^m r_j\sum_{i=1}^n s_{ij} \\
    &=& \sum_{i=1}^n \sum_{j=1}^m s_{ij}^2 + 2l_{1} \sum_{i=2}^n
    \sum_{j=1}^m s_{ij} - \sum_{i=2}^n l_i \sum_{j=1}^m s_{ij}  -
    \sum_{i=1}^n \sum_{j=1}^m s_{ij}r_j \\
    &=& \sum_{i=2}^n \sum_{j=1}^m s_{ij} \left(
    s_{ij} + 2l_{1} - l_i - r_j \right) - \sum_{j=1}^m s_{1j}
    \sum_{i=2}^n s_{ij}\\
    & \ge & \sum_{i=2}^n \sum_{j=1}^m s_{ij} \left( s_{ij} - s_{1j}
  \right) \ge 0.
  \end{eqnarray*}
\end{proof}

We can show that the upper bound on the number of block crossings
that our algorithm achieves is tight. To this end, we use the
existence of special Steiner systems for building (non-planar) worst-case
examples of arbitrary size in which many block crossings are
necessary; see Theorem~\ref{thm:worst-case-general}.

\begin{theorem}
  There exists an infinite family of graphs $G = (V,E)$ with set of
  lines $L$ so that $\Omega(|L|\sqrt{|E'|})$ block crossings are necessary in
  any solution, where $E'$ is the set of edges with at least two lines on them.
  \label{thm:worst-case-general}
\end{theorem}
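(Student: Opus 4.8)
The plan is to exhibit instances on which the Cauchy--Schwarz estimate in the proof of Theorem~\ref{thm:alg-gen-graph} is tight, by forcing, on each of many edges, a large group of lines to \emph{reverse} its order. On a single edge carrying $k$ mutually crossing lines the two boundary orders are reverses of each other, a permutation with $k+1$ breakpoints; since a block move destroys at most three breakpoints~\cite{bafna98}, any solution (monotone or not) spends at least $(k+1)/3$ block crossings on that edge. To make these per-edge reversals unavoidable while controlling the global parameters $|L|$ and $|E'|$, I would route the lines according to a balanced combinatorial design in which every pair of lines meets in exactly one edge.

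Concretely, I would take a Steiner system $S(2,k,v)$ --- for instance the Steiner triple systems ($k=3$), which exist for all $v\equiv 1,3\pmod 6$, or the point--line incidences of a projective plane of prime-power order $q$ (here $k=q+1$, $v=q^2+q+1$) --- so that the family is infinite. Identify the $v$ points with the lines $L$ and the blocks with the edges of $E'$. Then each edge carries exactly $k=|L_e|$ lines, each line lies on $r=(v-1)/(k-1)$ edges, and, crucially, any two lines share exactly one edge, so every forced crossing of a pair happens on that single shared edge. The number of blocks is $|E'|=v(v-1)/(k(k-1))$, whence
\[
  |L|\sqrt{|E'|}=v\sqrt{\tfrac{v(v-1)}{k(k-1)}}=\Theta\!\left(\tfrac{v^2}{k}\right).
\]

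To realise this abstract design as a legal instance I would give each block its own edge $e_i=(u_i,w_i)$ and connect, for every line, its $r$ edges into one simple path by \emph{private} connector edges (each used by a single line) running between the appropriate $w$- and $u$-vertices, with a degree-one terminal at each end. Because connector edges carry only one line, they host no crossings and, by the consistency condition, the order of the $k$ lines on $e_i$ near $u_i$ (resp.\ near $w_i$) equals the freely chosen cyclic order of the connectors around $u_i$ (resp.\ $w_i$). Hence I can prescribe the boundary orders of every edge independently and set them to be reverses of each other on each $e_i$. One checks that every line is a simple path, that two lines meet exactly in their one shared edge (path intersection property), and that terminals are distinct and of degree one (path terminal property); the reversal on each edge is then forced by consistency, so the crossings are unavoidable, and since each pair crosses only once they are monotone, so the bound applies to BCM and MBCM alike.

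It then remains to sum the single-edge bound: every edge of $E'$ forces at least $(k+1)/3$ block crossings, so any solution uses at least $|E'|\cdot(k+1)/3=\Theta(v^2/k)=\Omega(|L|\sqrt{|E'|})$ of them. The main obstacle, and the part deserving the most care, is the realisation step: showing that the reversal can be forced simultaneously on \emph{all} edges while keeping every line a simple path and satisfying both structural properties. The enabling observation is that single-line connectors decouple the edges, so the boundary orders can be prescribed edge-by-edge with no global conflict; once this is in place, the counting together with the choice of a balanced design --- which is exactly what makes Cauchy--Schwarz tight --- finishes the proof.
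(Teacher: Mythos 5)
Your proposal is correct and follows essentially the same route as the paper: a pairwise-balanced design (the paper uses the projective plane of order $q$, i.e., an $S(2,q+1,q^2+q+1)$) realized with one multi-line edge per design object, private single-line connector edges so that the boundary orders on each such edge can be prescribed freely, a forced reversal on each of them, and a per-edge count of $\Omega(k)$ block crossings (which you justify more explicitly than the paper does, via the breakpoint bound). The only real difference is that you identify metro lines with points and edges with blocks --- the dual of the paper's assignment of lines to blocks and edges to points --- which lets the argument run on any $S(2,k,v)$ (e.g., Steiner triple systems) rather than only on projective planes, where two blocks are guaranteed to meet in exactly one point; in the self-dual projective-plane case the two constructions coincide.
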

\begin{proof}
  From the area of projective planes it is known that, for any prime
  power $q$, a $S(q^2+q+1, q+1, 2)$ Steiner system
  exists~\cite{veblen1906finite}, that is,
  there is a set $\mathcal{S}$ of $q^2+q+1$ elements with subsets $S_{1}, S_2,
  \ldots, S_{q^2+q+1}$ of size $q+1$ such that any pair of elements
  $s, s' \in \mathcal{S}$ appears together in exactly one set $S_i$.

  We build a graph $G = (V,E)$ by first adding vertices $s_1$,
  $s_2$ and an edge $(s_{1}, s_{2})$ for any $s \in \mathcal{S}$. These edges
  will be the only ones with multiple lines on them, that is, $E'$.
  Additionally, we add an edge
  $s_{2}, s'_{1}$ for any pair $s,s' \in \mathcal{S}$. Next, we build a line
  $l_i$ for any set $S_i$ as follows. We take some arbitrary order
	$s, s', s'', \ldots, s^{(q)}$ on the elements of $S_i$, and build the path
	$s(l_i), s_{1}, s_{2}, s_{1}', s_{2}', s_{1}'', \ldots, s_{2}^{(q)}, t(l_i)$ with
  extra terminals $s(l_i)$ and $t(l_i)$ in which $l_i$ starts and
	ends, respectively; see Fig.~\ref{fig:worst-case-gen-graph-line}. As any pair of lines shares exactly one edge
  the
  path intersection property holds. We order the
  edges around vertices $s_1$ and $s_2$ so that all $q+1$ lines on the edge
  representing any $s \in \mathcal{S}$ have to cross by making sure that
	the orders of lines in $s_1$ and $s_{2}$ are exactly reversed; see
	Fig.~\ref{fig:worst-case-gen-graph-edge}. Then, $q$ block
  crossings are necessary on each edge, and, hence, $(q^2+q+1)q =
  \theta(q^3)$ block crossings in total.  On the other hand,
  $|L|\sqrt{|E'|} = (q^2+q+1)\sqrt{q^2+q+1} = \theta(q^3)$.
  Note that the graph $G$ is not planar.
\end{proof}

\begin{figure}[ht]
	\subfigure[Path $L_i$ is routed through the edges representing $s, s', s'',
	\ldots, s^{(q)}$.]{
	\label{fig:worst-case-gen-graph-line}
	\includegraphics{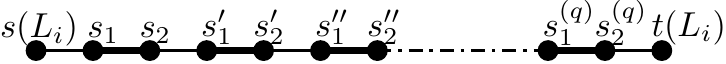}
	}
	\hfill
	\subfigure[The order of lines is reverted between $s_{1}$ and $s_{2}$.]{
	\label{fig:worst-case-gen-graph-edge}
	\includegraphics{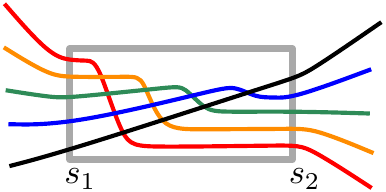}
	}
	\caption{Construction of the worst-case example.}
	\label{fig:wc-construction-gen-graph}
\end{figure}


\section{Conclusion and Open Problems}
\label{sec:conclusion}
We introduced a new variant of metro-line crossing minimization problem, 
and presented algorithms for single edges, paths, trees, and general graphs.
Our algorithm for general graphs cannot be applied if lines are more
complex subgraphs
than paths, or if the path intersection property does not hold.  On
the other hand, in many metro networks, there are just few lines violating
these properties. We suggest to first create an instance with
our properties by deleting few (parts of) lines. Then, after applying our
algorithm, the deleted parts can be reinserted by keeping them parallel to
other lines and reusing crossings as often as possible.

Another practical problem is the distribution of block crossings.
In our opinion, crossings of lines should preferably be close to the end
of their common subpath as this makes it easier to recognize that the
lines do cross. For making a metro line easy to follow the important
criterion is the number of its bends. Hence, an interesting question is how to
sort metro lines using the minimum total number of bends.

From a theoretical point of view, the complexity status of MBCM on a
single edge is an interesting open problem. Another question is whether there exist approximation algorithms for (M)BCM on trees or even general graphs.

\paragraph{Acknowledgments.}
We are grateful to Sergey Bereg, Alexander E. Holroyd, and Lev Nachmanson for
the initial discussion of the block crossing minimization problem, and for pointing out
a connection with sorting by transpositions.
We thank Jan-Henrik Haunert, Joachim Spoerhase, and Alexander Wolff
for fruitful discussions and suggestions.

\bibliographystyle{splncs03}
\bibliography{notes}

\begin{thebibliography}{10}
\providecommand{\url}[1]{\texttt{#1}}
\providecommand{\urlprefix}{URL }

\bibitem{argyriou09}
Argyriou, E.N., Bekos, M.A., Kaufmann, M., Symvonis, A.: On metro-line crossing
  minimization. J. Graph Algorithms Appl.  14(1),  75--96 (2010)

\bibitem{bafna98}
Bafna, V., Pevzner, P.A.: Sorting by transpositions. SIAM J. Discrete Math.
  11(2),  224--240 (1998)

\bibitem{bekos08}
Bekos, M.A., Kaufmann, M., Potika, K., Symvonis, A.: Line crossing minimization
  on metro maps. In: Hong, S.H., Nishizeki, T., Quan, W. (eds.) GD 2007. LNCS,
  vol. 4875, pp. 231--242. Springer, Heidelberg (2008)

\bibitem{benkert07}
Benkert, M., N{\"o}llenburg, M., Uno, T., Wolff, A.: Minimizing intra-edge
  crossings in wiring diagrams and public transport maps. In: Kaufmann, M.,
  Wagner, D. (eds.) GD 2006. LNCS, vol. 4372, pp. 270--281. Springer,
  Heidelberg (2007)

\bibitem{bulteau11}
Bulteau, L., Fertin, G., Rusu, I.: Sorting by transpositions is difficult. SIAM
  J. Discr. Math.  26(3),  1148--1180 (2012)

\bibitem{christie01}
Christie, D.A., Irving, R.W.: Sorting strings by reversals and by
  transpositions. SIAM J. Discr. Math.  14(2),  193--206 (2001)

\bibitem{elias06}
Elias, I., Hartman, T.: A 1.375-approximation algorithm for sorting by
  transpositions. IEEE/ACM Trans. Comput. Biol. Bioinformatics  3(4),  369--379
  (2006)

\bibitem{cogr}
Fertin, G., Labarre, A., Rusu, I., Tannier, E., Vialette, S.: Combinatorics of
  Genome Rearrangements. The MIT Press (2009)

\bibitem{groeneveld89a}
Groeneveld, P.: Wire ordering for detailed routing. IEEE Des. Test  6(6),
  6--17 (1989)

\bibitem{Heath98}
Heath, L.S., Vergara, J.P.C.: Sorting by bounded block-moves. Discrete Applied
  Mathematics  88(1--3),  181--206 (1998)

\bibitem{mareksadowska95}
Marek-Sadowska, M., Sarrafzadeh, M.: The crossing distribution problem. IEEE
  Trans. CAD Integrated Circuits Syst.  14(4),  423--433 (1995)

\bibitem{nollenburg09}
N{\"o}llenburg, M.: An improved algorithm for the metro-line crossing
  minimization problem. In: Eppstein, D., Gansner, E.R. (eds.) GD 2009. LNCS,
  vol. 5849, pp. 381--392. Springer, Heidelberg (2010)

\bibitem{pupyrev11}
Pupyrev, S., Nachmanson, L., Bereg, S., Holroyd, A.E.: Edge routing with
  ordered bundles. In: van Kreveld, M.J., Speckmann, B. (eds.) GD 2011. LNCS,
  vol. 7034, pp. 136--147. Springer, Heidelberg (2012)

\bibitem{schreiber2002high}
Schreiber, F.: High quality visualization of biochemical pathways in {BioPath}.
  In Silico Biology  2(2),  59--73 (2002)

\bibitem{veblen1906finite}
Veblen, O., Bussey, W.H.: Finite projective geometries. Trans. AMS  7(2),
  241--259 (1906), \url{http://www.jstor.org/stable/1986438}

\end{thebibliography}

\end{document}